\newif\ifpublish
\setlist[itemize]{leftmargin=*}
    \newcommand{\missing}[1]{}
    \newcommand{\alberto}[1]{}
    \newcommand{\bryan}[1]{}
    \newcommand{\lefteris}[1]{}
    \newcommand{\pasindu}[1]{}
    \newcommand{\philipp}[1]{}
    \newcommand{\missing}[1]{\textcolor{red}{#1}}
    \newcommand{\alberto}[1]{\textcolor{orange}{\textbf{Alberto:} #1}}
    \newcommand{\bryan}[1]{\textcolor{blue}{\textbf{Bryan:} #1}}
    \newcommand{\lefteris}[1]{\textcolor{yellow}{\textbf{Lefteris:} #1}}
    \newcommand{\pasindu}[1]{\textcolor{green}{\textbf{Pasindu:} #1}}
    \newcommand{\philipp}[1]{\textcolor{purple}{\textbf{Philipp:} #1}}
\newcommand{\codelink}{
    \ifpublish
        \url{https://github.com/PasinduTennage/mahi-mahi-consensus} (commit \texttt{13a2819800ad3b192ab3c54dfe3ec4d0b34ae361})
    \else
        Code available but link omitted for blind review.
    \fi
}
\newcommand{\pparagraph}[1]{\medskip \noindent \textbf{#1.}}
\newcommand{\para}[1]{\pparagraph{#1}}
\newcommand{\etal}{\textit{et al.} }
\newcommand{\ie}{\textit{i.e.,} }
\newtheorem{observation}{Observation}
\newcommand{\algsize}{\scriptsize}
\newcommand{\sysname}{\textsc{Mahi-Mahi}\xspace}
\newcommand{\mysticeti}{\textsf{Mysticeti}\xspace}
\newcommand{\wavelength}{\texttt{waveLength}\xspace}
\newcommand{\numleaders}{\texttt{leadersPerRound}\xspace}
\newcommand{\decider}{\ensuremath{\mathsf{D}}\xspace}
\newcommand{\status}{\ensuremath{status}\xspace}
\newcommand{\waveoffset}{\texttt{waveOffset}\xspace}
\newcommand{\leaderoffset}{\texttt{leaderOffset}\xspace}
\newcommand{\self}{\ensuremath{\mathsf{Self}}\xspace}
\newcommand{\store}{\ensuremath{DAG}\xspace}
\newcommand{\bcast}[2]{\ensuremath{\textsf{a\_bcast}_#1}(#2)}
\newcommand{\deliver}[2]{\ensuremath{\textsf{a\_deliver}_#1}(#2)}
\newcommand{\propose}{\textsf{Propose}\xspace}
\newcommand{\boost}{\textsf{Boost}\xspace}
\newcommand{\vote}{\textsf{Vote}\xspace}
\newcommand{\certify}{\textsf{Certify}\xspace}
\newcommand{\scommit}{\texttt{commit}\xspace}
\newcommand{\sskip}{\texttt{skip}\xspace}
\newcommand{\sundecided}{\texttt{undecided}\xspace}
\begin{document}

\title{\sysname: Low-Latency Asynchronous BFT DAG-Based Consensus}

\ifpublish
  \author{
    Philipp Jovanovic\inst{1} \and
    Lefteris Kokoris-Kogias\inst{2} \and
    Bryan Kumara\inst{3} \and
    Alberto Sonnino\inst{1,2} \and
    Pasindu Tennage\inst{4},
    Igor Zablotchi\inst{2}
  }
  \authorrunning{P. Jovanovic et al.}

  \institute{
    University College London (UCL) \and
    Mysten Labs \and
    Turing Institute \and
    EPFL
  }
\else
  \author{}
  \institute{}
\fi

\maketitle

\begin{abstract}
  We present \sysname, the first asynchronous BFT consensus protocol that achieves sub-second latency in a WAN setting while processing over 100,000 transactions per second.
\sysname achieves such high performance by leveraging an uncertified structured Directed Acyclic Graph (DAG) to forgo explicit certification.
This reduces the number of messages required to commit and the CPU overhead for certificate verification significantly.
\sysname introduces a novel commit rule that enables committing multiple blocks in each asynchronous DAG round.
\sysname can be para\-metrized either with a 5 message commit delay,
maximizing the commit probability under a continuously active asynchronous adversary, or with a 4 message commit delay, reducing latency under a more moderate and realistic asynchronous adversary.
We demonstrate safety and liveness of \sysname in a Byzantine context for all of these parametrizations.
Finally, we evaluate \sysname in a geo-replicated setting and compare its performance to state-of-the-art asynchronous consensus protocols, showcasing \sysname's significantly lower latency.

% \keywords{Asynchronous Consensus \and BFT \and Blockchain}

\end{abstract}

\section{Introduction}\label{sec:introduction}

%%% Short intro about consensus

Consensus enables a set of replicas to agree on a common value from an initial
set of proposals, even in the presence of failures or malicious actors~\cite{cachin2011introduction}.
It is a fundamental primitive for maintaining strong data consistency between
replicas in distributed or decentralized systems.

%%% Shortcomings of partially sync consensus protocols

Applications that require Byzantine Fault Tolerant (BFT)
consensus~\cite{sok-consensus}, such as blockchains~\cite{chainspace,sok-consensus,fastpay,sui-lutris,kokoris17omniledger,sonnino2020replay},
often rely on protocols designed for the partially syn\-chro\-nous network model
which aims to approximate mostly benign network conditions and allows the system to
perform well under these circumstances.
However, protocols designed for partial synchrony lose liveness under
asynchronous conditions, which can arise from poor connectivity, an active network adversary, or denial-of-service (DoS) attacks~\cite{consensus-dos}.
Asynchronous consensus protocols~\cite{cachin2011introduction,thunderbolt,ren2017}
address this issue by providing as much liveness as the network connectivity
allows.
To achieve this, these protocols sacrifice performance during periods
of network synchrony, resulting in significantly higher latency compared to
their partially synchronous counterparts. While state-of-the-art partially
synchronous protocols can process over $100,000$ transactions per second with
sub-second WAN latency~\cite{shoal++,mysticeti}, current asynchronous protocols achieve similar throughput with latencies on the order of seconds~\cite{narwhal-tusk}. 
This substantial latency drawback has made asynchronous consensus protocols less attractive for practical deployment.

%%% How dual-mode solve this

Dual-mode protocols~\cite{jolteon,bullshark} attempt to provide the best of both worlds by operating partially-synchronous consensus by default and reverting to a less performant asynchronous sub-protocol when network conditions become adverse.
However, these dual-mode protocols introduce complexity and are prone to errors, as they must maintain two separate protocol stacks and implement mechanisms to detect changing network conditions and switch between the two consensus modes. Additionally, they remain vulnerable to targeted attacks that can cause the protocol to switch constantly between the two modes~\cite{bullshark}.
Due to these drawbacks, no dual-mode protocol has yet been deployed in a production environment to the best of our knowledge.

%%% What we need

We therefore ask if it is possible to design a protocol that can simultaneously:
(i) provide liveness under asynchronous network
conditions,
(ii) achieve performance comparable to state-of-the-art partially-synchronous consensus protocols, and
(iii) maintain a simple design that allow for effective security analysis,
implementation, and maintenance?

%%% Protocol overview and discussion of the identified technical challenges

In this paper, we introduce \sysname, a novel low-latency and high-throughput
asynchronous consensus protocol that simultaneously achieves these goals.
\sysname accomplishes this through a combination of the following techniques.
(1) While state-of-the-art asynchronous protocols, such as Tusk~\cite{narwhal-tusk}, operate over a certified Directed Acyclic Graph (DAG) and attempt to commit one leader block every 9 message delays, \sysname utilizes an \emph{uncertified} DAG as its core data structure. This approach eliminates the overhead associated with the reliable broadcast~\cite{cachin2011introduction} of DAG vertices and allows \sysname to commit most blocks with only five message delays, aligning with the theoretical results of Cordial Miners~\cite{cordial-miners}.
(2) \sysname introduces a novel commit rule that enables the commitment of multiple leader blocks in each DAG round while ensuring safety and liveness in the presence of an asynchronous adversary.
(3) \sysname also explores more practical network assumptions and can be parameterized to further enhance average-case performance while maintaining liveness against a classic asynchronous adversary.

%%% Implementation plus evaluation summary

We implement \sysname in Rust and show that it can process an impressive $350,000$ transactions per second in geo-distributed environments with $50$ nodes, all while keeping latency below $2$ seconds. Additionally, \sysname can process $100,000$ transactions per second with latency below $1$ second. 
This achievement sets a new record in the realm of asynchronous consensus protocols and was previously only attainable by partially synchronous protocols~\cite{shoal++,mysticeti,shoal}.
We further show that \sysname maintains the same throughput while improving latency over recent state-of-the-art protocols, Tusk~\cite{narwhal-tusk} and Cordial Miners~\cite{cordial-miners} -- for which we provide the first known implementation -- achieving latency reductions of over $70$\% and $30$\%, respectively.

%%% Contributions

\para{Contributions} This paper makes the following contributions:
\begin{itemize}
  \item We introduce \sysname, the first asynchronous consensus protocol capable of committing with sub-second latency while maintaining high throughput. Notably, \sysname is the first DAG-based asynchronous consensus protocol capable of committing multiple leader blocks in each round.
  \item We provide detailed algorithms and formal security proofs for \sysname, demonstrating its safety and liveness under an asynchronous network model.
  \item We conduct a formal latency analysis of \sysname, evaluating its commit probability under various network conditions.
  \item We present an implementation and evaluation of \sysname, comparing it to other state-of-the-art protocols and demonstrating that \sysname achieves the lowest commit latency among available asynchronous consensus protocols.
\end{itemize}

\section{System Overview} \label{sec:overview}

We present an overview of \sysname and the settings in which it operates.

\subsection{Threat model, goals, and assumptions} \label{sec:model}
We consider a message-passing system with $n = 3f + 1$ validators processing transactions using the \sysname protocol. An adversary can adaptively corrupt up to $f$ validators, referred to as \emph{Byzantine}, who may deviate arbitrarily from the protocol. The remaining validators, called \emph{honest}, follow the protocol. The adversary is computationally bounded, ensuring that standard cryptographic properties such as the security of hash functions, digital signatures, and other primitives hold. Under these assumptions, \sysname is \emph{safe} as no two correct validators commit different sequences of transactions.
The communication network is asynchronous and messages can be delayed arbitrarily, but messages among honest validators are eventually delivered. Given these conditions \sysname is \emph{live}, meaning honest validators eventually commit transactions. We provide proofs in \Cref{sec:proofs}.

Furthermore, we analyze \sysname under the \textit{random network model}~\cite{narwhal-tusk}, a variant of the asynchronous network model. While the asynchronous model makes the worst-case assumption that the adversary has perpetually full control over the message schedule (\ie the order in which messages are received by honest validators), the random network model assumes that the message schedule is random (we give a more concrete definition in \Cref{sec:dag-structure}). We analyze \sysname with parameters optimized for the random network model,
representing an average-case evaluation. Our empirical results show that this parameterization generally outperforms a version of \sysname configured for maximum resilience against an asynchronous adversary, all while maintaining safety and liveness guarantees.

\sysname solves Byzantine Atomic Broadcast (BAB)~\cite{cristian1995}, enabling validators to reach consensus on a sequence of messages necessary for State Machine Replication (SMR). According to the FLP impossibility result~\cite{fischer1985}, BAB cannot be solved deterministically in an asynchronous setting. To address this, we employ a global perfect coin to introduce randomization, similar to previous work~\cite{blum2020asynchronous,cachin2000random,dag-rider,loss2018combining}. This coin can be constructed using an adaptively secure threshold signature scheme~\cite{bacho2022on,boneh2001short}, with the distributed key setup performed under fully asynchronous conditions~\cite{abraham2023bingo,abraham2023reaching,das2023practical,das2022practical,kokoris2020asynchronous}.

More formally, each validator $v_k$ broadcasts messages by invoking $\bcast{k}{m,q}$, where $m$ is the message and $q \in \mathbb{N}$ is a sequence number. Every validator $v_i$ has an output $\deliver{i}{m, q, v_k}$, where $m$ is the message, $q$ is the sequence number, and $v_k$ is the identity of the validator that initiated the corresponding $\bcast{k}{m,q}$. \sysname implements a BAB protocol guaranteeing the following properties~\cite{dag-rider}:

\begin{itemize}
    \item \textbf{Validity:} If an honest participant $v_k$ calls $\bcast{k}{m,q}$, then every honest participant $v_i$ eventually outputs $\deliver{i}{m, q, v_k}$, with probability $1$.
    \item \textbf{Agreement:} If an honest participant $v_i$ outputs $\deliver{i}{m, q, v_k}$, then every honest participant $v_j$ eventually outputs $\deliver{j}{m, q, v_k}$ with probability $1$.
    \item \textbf{Integrity:} For each sequence number $q \in \mathbb{N}$ and participant $v_k$, an honest participant $v_i$ outputs $\deliver{i}{m, q, v_k}$ at most once, regardless of $m$.
    \item \textbf{Total Order:} If an honest participant $v_i$ outputs $\deliver{i}{m, q, v_k}$ and

          $\deliver{i}{m', q', v_k'}$ where $q < q'$, all honest participants output

          $\deliver{j}{m, q, v_k}$ before $\deliver{j}{m', q', v_k'}$.
\end{itemize}

\subsection{Intuition behind the \sysname design}\label{sec:intuition}

\sysname builds upon DAG-based consensus protocols that achieve high throughput by processing $O(n)$ blocks per round and fully utilizing network resources. While maintaining these throughput advantages, \sysname focuses on reducing latency in asynchronous state machine replication. It introduces novel techniques to decrease the number of message delays required for block commitment and explores more practical network assumptions to further improve average-case performance.

State-of-the-art asynchronous protocols, such as Tusk~\cite{narwhal-tusk}, operate over a certified DAG and try to commit one leader block every three certified rounds, requiring three message delays to certify each round. This results in at least nine message delays. To reduce latency, \sysname operates over an uncertified DAG by forgoing the reliable broadcast~\cite{cachin2011introduction} of DAG vertices, committing most blocks with only five message delays which matches the theoretical results of Cordial Miners~\cite{cordial-miners}. This approach significantly reduces both bandwidth and compute cost, as validators send their blocks to every other validator only once per round, and they avoid the need to verify cryptographic certificates resulting from consistent broadcast.

This, however, creates the first challenge (\textbf{Challenge 1}): handling equivocations practically. Unlike certified DAG protocols~\cite{narwhal-tusk,dumbo-ng,dag-rider,bullshark,dispersedledger}, \sysname cannot rely on certificates to prevent equivocations, necessitating the design of a novel commit rule to tolerate them. Cordial Miners~\cite{cordial-miners} also face this challenge, but they address it by eventually excluding Byzantine validators that provably equivocate, which can take a long time in asynchrony.

While having five rounds between leaders provides a good probability of committing in asynchronous conditions, it also results in relatively high latency, which is not necessary for ensuring safety.
Although it can be shown that we can implement a commit rule that operates in just three message delays (see \Cref{sec:proofs}), this approach would not work in our network models as it would (1) introduce significant latency variance and (2) lose liveness in the asynchronous model.
Instead, we focus on addressing (\textbf{Challenge 2}): developing a commit rule that effectively reduces average-case latency without sacrificing worst-case liveness. We find that it is possible to reduce the number of rounds to four, achieving a balance between average-case latency in random network conditions and worst-case latency in the classic asynchronous model.

Even with this enhancement, committing only once every four message delays still results in significant latency variance for transactions that are not part of a committed leader block. A primary goal for \sysname is to commit multiple blocks in each round, which would help ensure that the system's tail latency aligns more closely with the four-message delay.
To achieve this, we need to address (\textbf{Challenge 3}): commit every block directly without relying on a sufficient round difference between leader blocks.
If \sysname were to adopt a traditional recursive commit rule~\cite{narwhal-tusk,cordial-miners}, which mandates that each leader block always references all previous leader blocks in their causal history, it would at best be able to commit once every four rounds. However, \sysname recognizes that this causal reference is only necessary when there is no sufficient evidence to directly commit a block, which is not the typical case (\Cref{sec:evaluation}).
This insight indicates that the recursive commit rule used in prior research is overly conservative in its approach to skipping blocks, leading to unnecessary delays, particularly during benign node crashes, which are immediately identifiable. To resolve this issue, we propose a new commit rule capable to promptly determine for each block whether it can be committed or discarded as soon as that decision is evident.

\Cref{sec:decision-rule} presents the \sysname commit rule that addresses these challenges. As a result, \sysname is the first BFT consensus protocol capable of committing multiple blocks per round in the average case, while ensuring both safety and liveness in the asynchronous and random network models.

\subsection{Structure of the \sysname DAG}\label{sec:dag-structure}

We present the structure of the \sysname DAG, building an uncertified DAG that offers similar guarantees to a certified DAG, as shown in related work~\cite{mysticeti,narwhal-tusk,cordial-miners}.

The \sysname protocol operates in a sequence of logical \emph{rounds}. In each round, every honest validator proposes a unique signed \emph{block}, while Byzantine validators may attempt to equivocate by sending multiple blocks or none at all. During a round, validators receive transactions from users and blocks from other validators, which they refer into their proposed blocks. A block includes hash references to blocks from prior rounds, starting with their most recent block, and adds \emph{fresh transactions} not yet included in preceding blocks. Once a block references at least $2f+1$ blocks from the previous round, the validator signs it and broadcasts it.
Clients send transactions to a validator, who adds them to their blocks. If a transaction does not finalize quickly enough, the client sends it to a different validator.

\para{Block creation and validation}
A block must include at least the following elements: (1) the author $A$ of the block and their signature on the block contents; (2) a round number $R$; (3) a list of transactions; (4) at least $2f+1$ distinct hashes of valid blocks from the previous round $R-1$, along with potentially others from prior rounds; and (5) a share of a global perfect coin. As already mentioned the coin can be reconstructed from any $2f+1$ shares.

A block is \emph{valid} if: (1) the signature is valid and
the author
$A$ is part of the validator set; (2) all hashes point to distinct valid blocks from previous rounds, and the sequence of past blocks includes $2f+1$ blocks from the previous round $R-1$; and (3) the share of the global perfect coin is valid\footnote{
    Each individual share of the coin can be independently verified if the coin is implemented through a threshold signature.
}.
Honest validators only include valid blocks into their DAG and discard invalid ones. Furthermore, honest validators only include hashes of blocks once they have downloaded their entire causal history, ensuring that they have successfully validated the block's causal history.

\para{Rounds and waves}
\Cref{fig:rounds} (left) illustrates an example of a \sysname DAG with four validators, $(v_0, v_1, v_2, v_3)$ when parametrized to commit in $5$ rounds.
For the practically efficient $4$-round \sysname, the second \boost round is omitted.

\begin{figure}[t]
    \centering
    \vskip -1em
    \includegraphics[width=\textwidth]{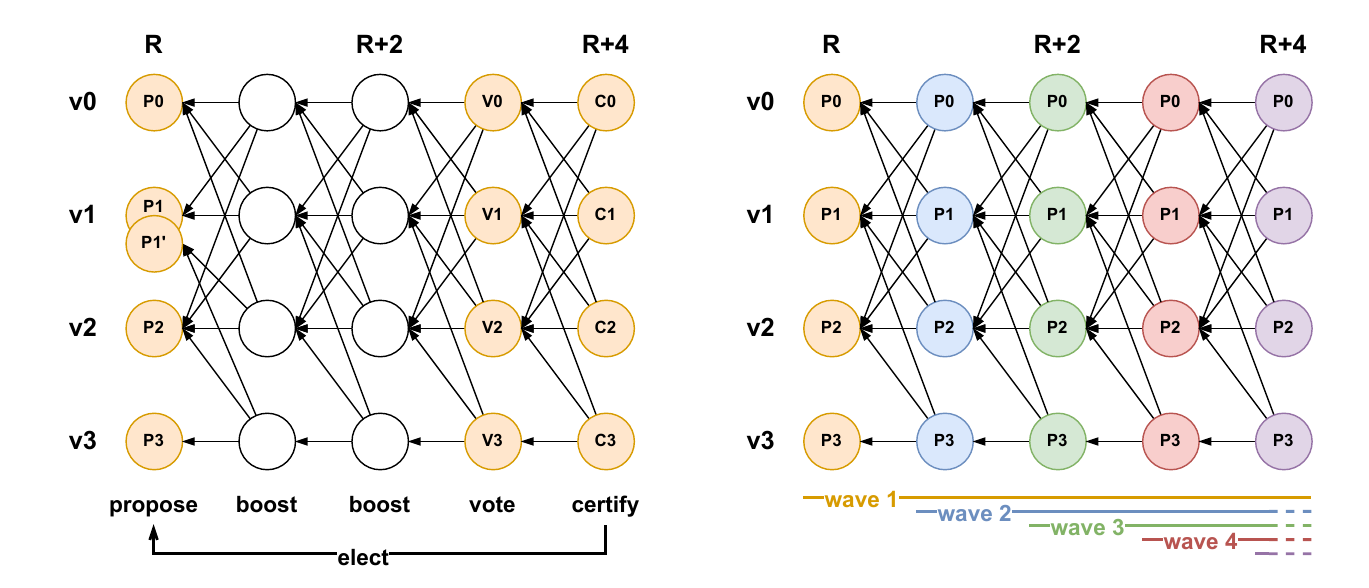}
    \caption{
        The structure of the \sysname DAG. Left: The structure of a wave, consisting of 5 rounds (\propose, \boost, \boost, \vote, \certify). Right: Waves patterns in the \sysname protocol (each round starts a new overlapping wave).
    }
    \vskip -1em
    \label{fig:rounds}
\end{figure}

In its 5-rounds configuration, \sysname defines a \emph{wave} of 5 rounds for every block. The first round (\propose) includes the blocks that the wave attempts to commit ($P_0$, $P_1$, $P_2$, $P_3$) and the equivocating block $P_1'$.
The second and third rounds (\boost) act as a buffer, helping to propagate these blocks to as many validators as possible.
In the fourth round (\vote), every block serves as a \emph{vote} for the first block of the \propose that it encounters when performing a depth-first search following the block hash references. In the example shown in this figure, blocks $V_0$, $V_1$, and $V_2$ are votes for $P_0$, $P_1$, $P_2$ (but not for $P_1'$ and $P_3$), while block $V_3$ is a vote for $P_0$, $P_1$, $P_2$, and $P_3$ (but not for $P_1'$). The procedure $\Call{IsVote}{\cdot}$ of \Cref{alg:helper} (\Cref{sec:algorithms}) formally defines a vote.
The fifth round (\certify) reveals which blocks from the \propose round have been implicitly certified. A block from the \propose round is considered \emph{certified} or \emph{has a certificate} if a block from the \certify round contains in its causal history at least $2f+1$ blocks from the \vote round that are a vote for the block. In this example, blocks $C_0$, $C_1$, $C_2$, and $C_3$ serve as certificates for $P_0$, $P_1$, and $P_2$. This round also opens the global perfect coin, which the decision rule (\Cref{sec:decision-rule}) uses to circumvent the FLP result and to commit blocks under asynchrony by electing some of the \propose blocks as \emph{leaders}. Similar to related work~\cite{narwhal-tusk,dag-rider} this strategy selects leaders ``after the fact'' to deter a network adversary from strategically delaying leader blocks so that they are not referenced by blocks of the \vote round.

As illustrated in \Cref{fig:rounds} (right), \sysname initiates a new wave every round. The rounds of each wave follow a consistent pattern: \propose round: $R$, \boost round: $R+1$, \boost round: $R+2$, \vote round: $R+3$, and \certify round: $R+4$. This pattern repeats continuously, with each new round starting a fresh wave. \Cref{alg:decider} of \Cref{sec:algorithms} formally defines a wave.

\para{Random network model}
We analyze \sysname in the standard asynchronous network model, as well as in the more practical~\cite{mysticeti} random network model~\cite{narwhal-tusk}. In the asynchronous model, the adversary chooses which blocks are received by each honest validator at each round. In contrast, the random network model assumes that at each round $R+1$, an honest validator receives and references valid round-$R$ blocks from a \textit{uniformly random} subset of $2f+1$ validators. \Cref{sec:proofs} provides further details and analyses the commit probability of \sysname in both models.

\section{The \sysname Protocol} \label{sec:protocol}
We present \sysname configured with a wavelength of 5 rounds. A configuration of \sysname with a wavelength of 4 rounds operates similarly, but omits one \boost round, and addresses \textbf{challenge 2} of \Cref{sec:overview} as we empirically show in \Cref{sec:evaluation}.

\subsection{Proposers and anchors} \label{sec:proposers}

\sysname leverages a perfect global coin to define several \emph{leader slots} per round. A leader slot is a tuple (validator, round) and can be either empty or contain the validator's proposal for the respective round. If the validator is Byzantine, the slot may also contain more than one (equivocating) block. In line with related work~\cite{mysticeti}, the slot can assume one of three states: \scommit, \sskip, or \sundecided. All slots are initially set to \sundecided and the goal of the protocol is to classify them as \scommit or \sskip. The number of leader slots instantiated per round and the number of boost rounds can be configured (\Cref{sec:evaluation} explores different configurations).

\subsection{The \sysname decision rule} \label{sec:decision-rule}
We present the decision rule of \sysname leveraging an example protocol run. \Cref{sec:algorithms} provides detailed algorithms and \Cref{sec:example} provides a complete step-by-step protocol execution. \Cref{fig:example} illustrates an example of a local view of a \sysname validator, in a system with four validators, $(v_0, v_1, v_2, v_3)$ and parameterized with two leader slots per round. In this example, we refer to blocks using the notation $B_{(v_i, R)}$,
where $v_i$ is the issuing validator and $R$ is the block's round.

\begin{figure}[t]
    \centering
    \vskip -1em
    \includegraphics[width=\textwidth]{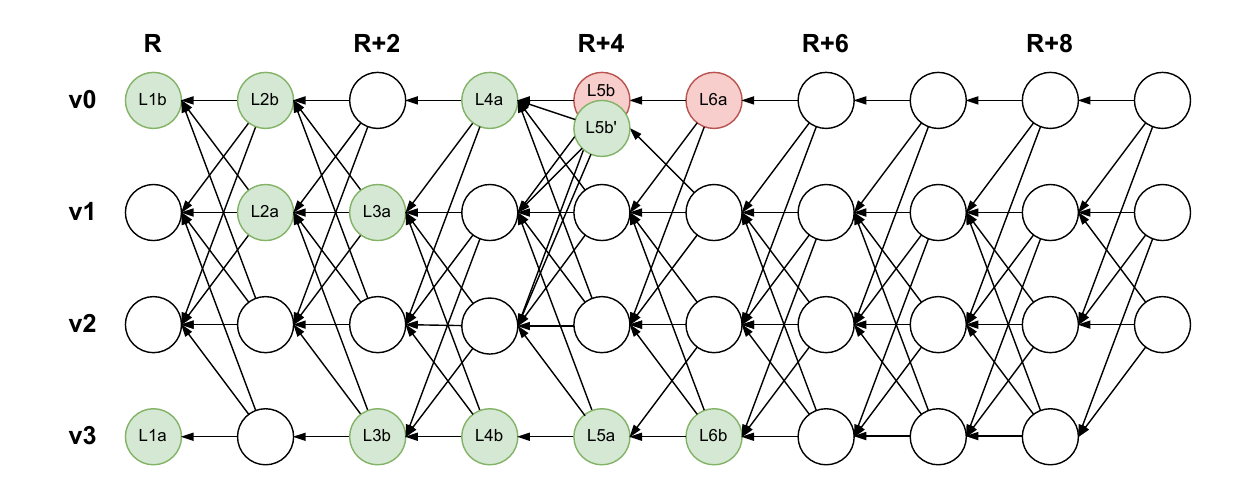}
    \caption{Example execution with 4 validators, wave length of 5 rounds and 2 leader slots per round.}
    \label{fig:example}
    \vskip -1em
\end{figure}

All proposer slots are initially in the \sundecided state. The validator holds the portion of the DAG depicted in \Cref{fig:example} and attempts to classify as many blocks in the leader slots as possible as either \scommit or \sskip.

\para{Step 1: Determine the leader slots}
The validator begins by reconstructing the global perfect coin to determine the leader slots for each round. As shown in \Cref{fig:rounds} (Left), the coin shares embedded in round $R+4$\footnote{
    Or $R+3$ when \sysname is configured with a wave length of 4 rounds.
}
(the \certify round) deterministically establish the leader slots for round $R$ (the \propose round).

In this example, the validator reconstructs the coin from any set of $2f+1$ blocks from round $R+4$ of a wave, then uses it as a seed to deterministically select two leader slots for round $R$: $L_{1a}$ and $L_{1b}$, as illustrated in \Cref{fig:example}. The coin also imposes an order between these two slots: by convention, $L_{1a}$ is the \emph{first} leader slot and $L_{1b}$ is the \emph{second} leader slot of round $R$. The validator repeats this process for every subsequent wave, determining leader slots $L_{2a}$ and $L_{2b}$ from the coin shares in round $R+5$, $L_{3a}$ and $L_{3b}$ from those in round $R+6$, and so on. The validator then sorts these leader slots in descending order: $[L_{6b}, L_{6a}, L_{5b}, L_{5a}, L_{4b}, L_{4a}, L_{3b}, L_{3a}, L_{2b}, L_{2a}, L_{1b}, L_{1a}]$.

This mechanism of determining multiple, potentially empty, leader slots from a global perfect coin is the first step towards addressing \textbf{challenge 3} (\Cref{sec:overview}). Even if validators have different views of the DAG, they will still deterministically try to decide the same leader slots, in the same order, for a given round---regardless of whether they have a block for that slot in memory. This enables \sysname to achieve low latency by electing more than one leader per round and using these slots to order the blocks they causally refer, as described below.

\para{Step 2: Direct decision rule}
The validator attempts to classify each slot, even those without a block as either \scommit or \sskip. To do so, the validator processes each slot individually, starting with the highest ($L_{6b}$), applying the \sysname \emph{direct decision rule}. The validator classifies a block $B$ in a slot as \sskip if it observes $2f+1$ blocks from the subsequent \vote round that do not encounter $B$ when performing a depth-first search following the blocks' references, and as \scommit if it observes $2f+1$ \emph{certificates} over it. As discussed in \Cref{sec:overview}, a certificate over a block $B$ is a block from the \certify round that references at least $2f+1$ blocks from the \vote round, each of which encounter $B$ when performing a depth-first search starting at the voting block. Otherwise, the validator leaves the slot as \sundecided (for now).

In this example, the validator targets $L_{6b}$ first.
It observes that
$B_{(v_0, R+9)}$, $B_{(v_1, R+9)}$, and $B_{(v_2, R+9)}$ are certificates
for $L_{6b}$. Therefore, it classifies $L_{6b}$ as \scommit.
\Cref{sec:evaluation} shows that this scenario is the most common
(in the absence of an asynchronous adversary) and results in the
lowest latency. The validator then
targets
$L_{6a}$ and observes that
$B_{(v_1, R+8)}$, $B_{(v_2, R+8)}$, and $B_{(v_3, R+8)}$ do not vote for it.
Therefore, it classifies $L_{6a}$ as \sskip. The presence of $2f+1$ blocks
from the \vote round that do not vote for a block ensures that it will never
be certified, and will thus never be committed by other validators with a
potentially different local view of the DAG. \Cref{sec:evaluation} shows
that this rule allows \sysname to promptly skip (benign) crashed leaders
to minimize their impact on the protocol's performance.

Malicious validators may attempt to equivocate by creating multiple blocks for the same slot, such as $L_{5b}$ and $L_{5b}'$ in this example. However, the direct decision rule ensures that at most one of these blocks will be classified as \scommit, while the others will be classified as \sskip. In this example, the block $B_{(v_0, R+7)}$ is a vote for $L_{5b}$ (and not for $L_{5b}'$) as it is the first block of the slot encountered when performing a depth-first search starting at $B_{(v_0, R+7)}$ and recursively following all blocks in the sequence of block hashes. Conversely, $B_{(v_1, R+7)}$, $B_{(v_2, R+7)}$, and $B_{(v_3, R+7)}$ are votes for $L_{5b'}$.

This strategy addresses \textbf{challenge 1}. Even though Byzantine validators might equivocate by creating multiple blocks per slot, the causal references defined by the DAG allow the validator to interpret blocks from the \certify round as certificates for blocks from the \propose round. Coupled with the rule that honest validators author at most one block per round, this ensures that at most one block per slot receives a certificate, while all possible other equivocating blocks are skipped. In essence, \sysname embeds the execution of a Byzantine consistent broadcast~\cite{cachin2011introduction} into the DAG.

\para{Step 3: Indirect decision rule}
In the (rare) case where the direct decision rule cannot classify a slot, the validator uses the \sysname \emph{indirect decision rule}. This rule looks at future slots to decide about the current one. First, it finds an \emph{anchor}. This is the first block of the next wave (that is, the earliest slot with a round number
$R' > R + 4$) that is either still classified as \sundecided or already classified as \scommit. If the anchor is \sundecided, the validator marks the current slot as \sundecided. If the anchor is \scommit, the validator checks if it references at least one certificate over the current slot. If it does, the validator marks the current slot as \scommit. If it does not, the validator marks the current slot as \sskip. \Cref{sec:proofs} shows the direct and indirect decision rules are consistent, namely if one validator direct commits a block no honest validators will indirect skip it (and vice versa).

In this example, the validator fails to classify $L_{1a}$ using the direct decision rule as there is only one certificate for $L_{1a}$ and thus
searches for its anchor. Since $L_{6a}$ has been classified as \sskip, it cannot serve as an anchor; therefore, $L_{6b}$ becomes the anchor for $L_{1a}$. Given that block $B_{(v_3, R+4)}$, which serves as a certificate for $L_{1a}$, is referenced in $L_{6b}$'s causal history, the validator classifies $L_{1a}$ as \scommit.

This rule is the last step to solving \textbf{challenge 3}. It allows the validator to indirectly decide on a block by leveraging the earliest anchors rather than waiting for the next leader slot which may come much later. This enables \sysname to eliminate the need for non-leader blocks between leader slots, achieving low latency by electing leader slots in every round.

\para{Step 4: Leader slots sequence}
After processing all slots, the validator derives an ordered sequence of the blocks contained in the leader slots.
It then iterates over this sequence, committing all slots marked as \scommit and skipping all slots marked as \sskip.
This process continues until the validator encounters the first \sundecided slot. As demonstrated in \Cref{sec:proofs},
this commit sequence is safe, and eventually, all slots will be classified as either \scommit or \sskip.

In the example shown in \Cref{fig:example}, the leader sequence output by the validator is [$L_{1a}$, $L_{1b}$, $L_{2a}$, $L_{2b}$, $L_{3a}$, $L_{3b}$, $L_{4a}$, $L_{4b}$, $L_{5a}$, $L_{5b}'$, $L_{6b}$]
. \Cref{sec:example} provides a detailed walkthrough of the decision rule applied to the example DAG in \Cref{fig:example}, guiding the reader step-by-step through deriving this commit sequence.

\para{Step 5: Commit sequence}
Following the approach introduced by DagRider~\cite{dag-rider}, the validator linearizes the blocks within the sub-DAG defined by each leader block by performing a depth-first search. If a block has already been linearized by a previous leader slot, it is not re-linearized. The validator processes each leader slot sequentially, ensuring that all blocks are included in the final commit sequence in the correct order, according to their causal dependencies. The procedure $\Call{LinearizeSubDags}{\cdot}$ of \Cref{alg:helper} (\Cref{sec:algorithms}) formally describes this process.

In this example, $L_{1a}$ and $L_{1b}$ do not define any sub-DAG (the example begins at round $R$) and are thus directly added to the commit sequence. Next, $L_{2a}$ defines the sub-DAG $\{ L_{1b}, B_{(v_1, R)}, B_{(v_2, R)}, L_{2a} \}$, which is linearized as [$B_{(v_1, R)}$, $B_{(v_2, R)}$, $L_{2a}$] since $L_{1b}$ is already part of the commit sequence.
The validator continues this process for each leader in the sequence, linearizing the sub-DAGs defined by $L_{3b}$, then $L_{3a}$ and so forth following the procedure above. The final commit sequence is
    [
        $L_{1a}$, $L_{1b}$,
        $B_{(v_1, R)}$, $B_{(v_2, R)}$, $L_{2a}$, $L_{2b}$,
        $B_{(v_2, R+1)}$, $L_{3a}$, $B_{(v_3, R+1)}$, $L_{3b}$,
        $B_{(v_0, R+2)}$, $B_{(v_2, R+2)}$, $L_{4a}$, $L_{4b}$,
        $B_{(v_1, R+3)}$, $B_{(v_2, R+3)}$, $L_{5a}$, $L_{5b}'$,
        $B_{(v_1, R+4)}$, $B_{(v_2, R+4)}$, $L_{6b}$
    ].

\section{Implementation} \label{sec:implementation}

We implemented a networked, multi-core \sysname validator in Rust by forking the
\mysticeti codebase~\cite{mysticeti-code}, consisting of about $14,000$ LOC.
Our implementation utilizes \texttt{tokio}~\cite{tokio} for asynchronous networking and employs raw TCP sockets for communication. We rely on \texttt{ed25519-consensus}~\cite{ed25519-consensus} for asymmetric cryptography and \texttt{blake2}~\cite{rustcrypto-hashes} for cryptographic hashing. To ensure data persistence and crash recovery, we implemented a Write-Ahead Log (WAL) tailored to the unique requirements of our consensus protocol.
Furthermore, we implemented Cordial Miners~\cite{cordial-miners}, a state-of-the-art DAG-based asynchronous consensus protocol, using the same system components.
This enabled us to to perform a comparative evaluation with \sysname, see \Cref{sec:evaluation}.
Since the Cordial Miners paper lacks both implementation and evaluation, we believe our implementation and evaluation are additional contributions of our work.
We are open-sourcing both our implementations of \sysname and Cordial Miners, along with our orchestration tools, to ensure reproducibility of our results\footnote{\codelink}.

\section{Evaluation}\label{sec:evaluation}

We evaluate the throughput and latency of \sysname through experiments conducted on Amazon Web Services (AWS), demonstrating its performance improvements over the state-of-the-art. We evaluate \sysname with different parametrizations, with a wave length of $4$ and $5$ and with different numbers of leaders per round.

We compare \sysname with Tusk~\cite{narwhal-tusk}, as an example of certified DAG-based consensus protocol, and Cordial Miners~\cite{cordial-miners}, as an example of an uncertified DAG-based protocol.
We choose these protocols because, to the best of our knowledge, Tusk has shown the highest throughput among all published and implemented asynchronous BFT protocols when evaluated in a geo-distributed environment.
Cordial Miners, while lacking an implementation and evaluation, theoretically proves excellent latency bounds and is the protocol most similar to \sysname.
We also considered a performance comparison with other recent asynchronous consensus protocols, including Pace~\cite{pace}, Fin~\cite{fin}, ParBFT~\cite{parbft}, and SQ~\cite{sui2023}, but ultimately decided against them. The reasons for this decision is that their implementations are either closed-source, only capable of handling a limited number of block proposals (leading to crashes under sustained load), or unable to operate in a WAN environment (resulting in deadlocks after a few seconds).

Our evaluation particularly aims to demonstrate the following claims:
\begin{enumerate}[label={\bf C\arabic*}]
    \item\label{claim:c1} \sysname has similar throughput and lower latency than the baseline state-of-the-art protocols when operating in ideal conditions.
    \item\label{claim:c2} \sysname scales well by maintaining high throughput and low latency as the number of validators increases.
    \item\label{claim:c3} \sysname has a similar throughput to, and lower latency than, Cordial Miners, when operating in the presence of (benign) crash faults.
    \item\label{claim:c4} \sysname latency decreases when increasing the number of leader slots per round (up to 3 leaders per round).
    \item\label{claim:c5} \sysname parametrized with a wave length of $4$ rounds has lower latency in our geo-replicated network than when configured with a wave length of $5$ rounds.
\end{enumerate}
Note that evaluating the performance of BFT protocols in the presence of Byzantine faults is an open research question~\cite{twins}, and state-of-the-art evidence relies on formal proofs of safety and liveness (presented in \Cref{sec:proofs}).
While there is a need to robustly tolerate Byzantine faults, we note that they are rare in observed delegated proof-of-stake blockchains, as compared to crash faults which occur commonly~\cite{mysticeti}.

\subsection{Experimental Setup}

We deploy \sysname on AWS, using \texttt{m5d.8xlarge} instances across $5$ different AWS regions:
Ohio (us-east-2), Oregon (us-west-2), Cape Town (af-south-1), Hong Kong (ap-east-1), and Milan (eu-south-1).
Validators are distributed across those regions as equally as possible.
Each machine provides $10$\,Gbps of bandwidth,
$32$ virtual CPUs (16 physical cores) on a $3.1$\,GHz Intel Xeon Skylake 8175M,
$128$\,GB memory,
and runs Linux Ubuntu server $22.04$.
We select these machines because they provide decent performance and are in the price range of ``commodity servers''.

In the following, \emph{latency} refers to the time elapsed from the moment a client submits a transaction to when it is committed by the validators,
and \emph{throughput} refers to the number of transactions committed per second.
Each data point is the average latency of $3$ runs and the error bars represent one standard deviation (error bars are sometimes too small to be visible on the graph).
We instantiate several geo-distributed benchmark clients within each validator submitting transactions in an open loop model, at a fixed rate.
We experimentally increase the load of transactions sent to the systems, and record the throughput and latency of commits.
As a result, all plots illustrate the steady-state latency of all systems under low load, as well as the maximal throughput they can provide after which latency grows quickly.
Transactions in the benchmarks are arbitrary and contain $512$ bytes.
Unless stated otherwise, we configure \sysname with $2$ leaders per round.
In the following graphs, we refer to \sysname with a wave length of $5$ as \sysname-5 and \sysname with a wave length of $4$ as \sysname-4.

\subsection{Benchmark under ideal conditions}

We assess the performance of \sysname under normal, failure-free conditions in a wide-area network (WAN) environment.
\Cref{fig:best-case-performance} presents the performance results of \sysname in a geo-replicated setting,
comparing both a small committee of $10$ validators and a large committee of $50$ validators.

\begin{figure}[t]
    \vskip -1em
    \centering
    \includegraphics[width=\linewidth]{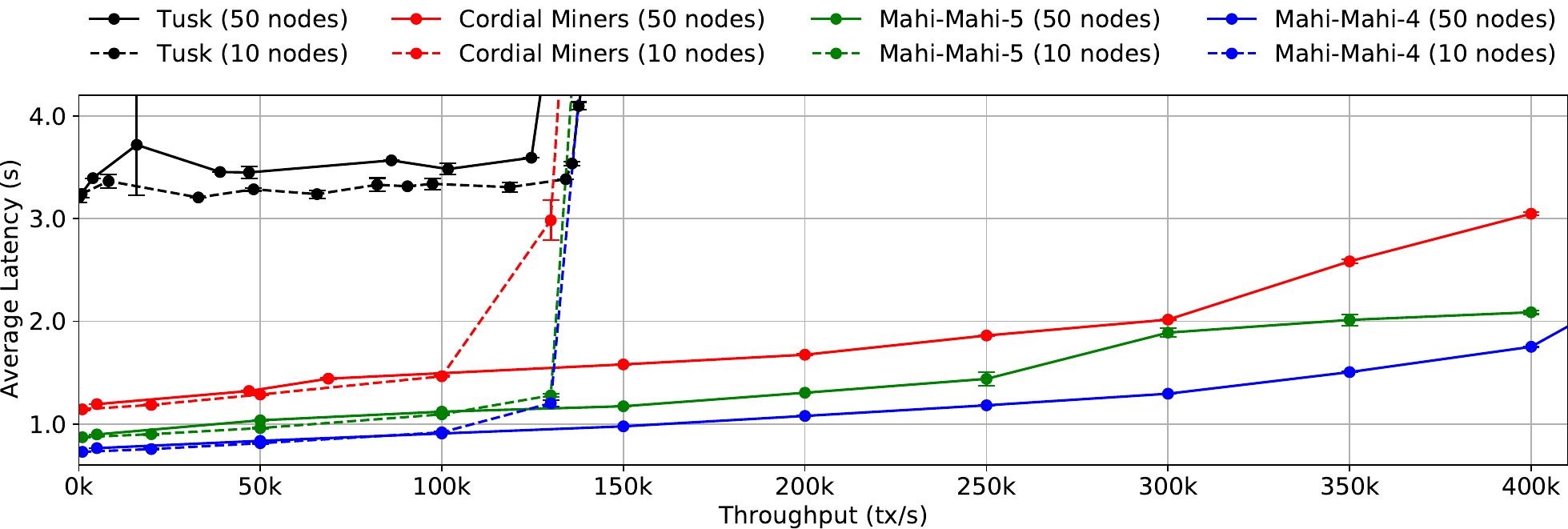}
    \caption{
        Comparative throughput-latency performance of \sysname, Tusk, and Cordial Miners. WAN measurements with $10$ and $50$ validators. No validator faults. $512$B transaction size.
    }
    \label{fig:best-case-performance}
    \vskip -1em
\end{figure}

For a small committee of $10$ nodes, all three systems---Tusk, Cordial Miners, and \sysname---reach a peak throughput of approximately $100$k-$130$k transactions per second (tx/s).
However, their latencies vary significantly.
Tusk and Cordial Miners achieve average latencies of
$3.5$s and
$1.5$s, respectively.
In contrast,
\sysname configured with a wave length $5$ has a latency of $1.1$s,
representing a substantial reduction of
% (3.5-1.1)/3.5
$68$\% compared to Tusk and
% (1.5-1.1)/1.5
$27$\% compared to Cordial Miners.
\sysname with wave length $4$ has a latency of $0.9$s,
representing a substantial reduction of
% (3.5-0.9)/3.5
$74$\% compared to Tusk and
% (1.5-0.9)/1.5
$40$\% compared to Cordial Miners.
Tusk's higher latency stems from its certified DAG architecture,
requiring at least $9$ network messages to commit a block.
While Cordial Miners bypasses DAG certification,
it can only commit one leader every $5$ rounds.
In contrast, \sysname operating with wave length 5 consistently commits multiple blocks.
\sysname operating with wave length $4$ further reduces latency as it commits blocks after $4$ message delays.
These results validate our claim~\ref{claim:c1}.

For a large committee of 50 nodes,
\Cref{fig:best-case-performance} shows that the throughput of Cordial Miners and \sysname exceeds 350,000 transactions per second (tx/s), while Tusk's throughput remains around 125,000 tx/s.
This perhaps surprising increase in throughput occurs
because our \sysname's validator implementation is optimized for large networks and
does not fully utilize all available resources (network, disk, CPU) when deployed with smaller committees.
Consequently, adding more validators improves resource multiplexing, boosting \sysname's performance.
Additionally, as the committee size grows, the number of blocks per round increases, thus a larger
number of blocks are included in the causal history of elected leader blocks, without incurring additional network hops.
Unlike Tusk, both Cordial Miners and \sysname experience no significant CPU overhead as the committee size increases, and bandwidth does not become a bottleneck at these throughput levels.
However, we do not expect further throughput gains by increasing the committee size beyond 50 nodes (such experiments would be prohibitively expensive).
As expected, Cordial Miners and \sysname share nearly identical throughput since both rely on the same DAG implementation, and throughput is determined by the efficiency of the DAG propagation layer.

In terms of latency, Tusk and Cordial Miners achieve average latency of
3.5s and
2.6s, respectively.
\sysname parametrized with a wave length of 5 has a latency of
2s (at 350,000 tx/s), which is a
% (3.5-2)/3.5
42\% reduction compared to Tusk and a
% (2.6-2)/2.6
23\% reduction compared to Cordial Miners.
\sysname with a wave length 4 has a latency of 1.5s, which is a
% (3.5-1.5)/3.5
57\% reduction compared to Tusk and a
% (2.6-1.5)/2.6
42\% reduction compared to Cordial miners.
These results validate our claim~\ref{claim:c2}. Comparing the two versions of \sysname in those two experiments also validates our claim~\ref{claim:c5}.

\subsection{Performance under faults}
\Cref{fig:crash} depicts the performance of all systems
when a committee of 10 validators suffers 3 crash-faults (the maximum that can be tolerated for this committee size).

\begin{figure}[t]
    \vskip -1em
    \begin{minipage}{0.48\textwidth}
        \centering
        \includegraphics[width=\textwidth]{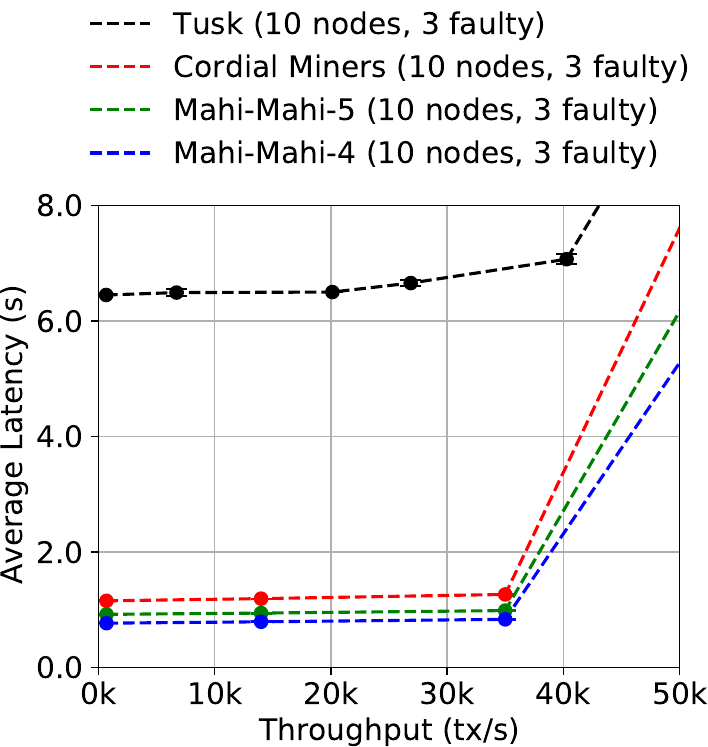}
        \caption{
            Comparative throughput-latency of \sysname, Tusk, and Cordial Miners. WAN measurements with 10 validators. Three faults. 512B transaction size.
        }
        \label{fig:crash}
    \end{minipage}
    \hfill
    \begin{minipage}{0.48\textwidth}
        \centering
        \includegraphics[width=\textwidth]{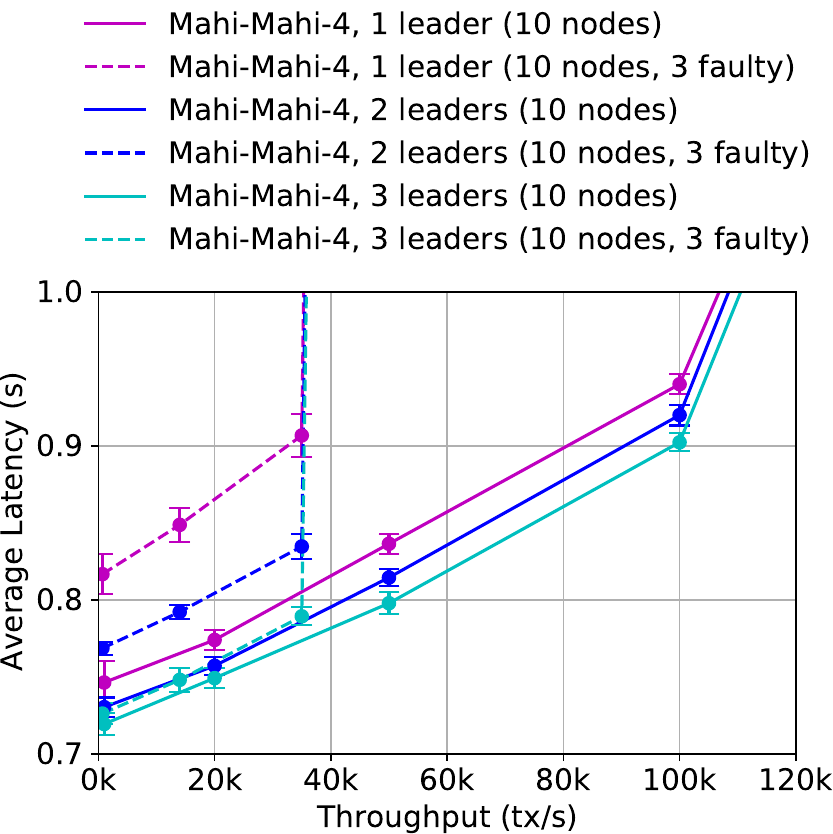}
        \caption{
            Impact of the number of leaders per round in \sysname. WAN measurements with 10 validators. Zero and three faults. 512B transaction size.
        }
        \label{fig:leaders}
    \end{minipage}
    \vskip -1em
\end{figure}

We observe that all three systems achieve a throughput of approximately $35,000$-$40,000$ tx/s.
Tusk and Cordial Miners record a latency of around $7$s and $1.7$s, respectively.
\sysname records a latency of $0.95$s and $0.85$s when running with a wave length $5$ and $4$, respectively.
Despite the presence of faulty validators, the DAG continues to collect and disseminate transactions without significant impact.
The reduction in throughput seen in \Cref{fig:crash}, compared to \Cref{fig:best-case-performance}, can be attributed to two primary factors:
(1) the loss of capacity due to faulty validators, and
(2) the higher frequency of missing elected leader blocks, which leads to increased commit delays.
As expected, the latency advantage of \sysname over Cordial Miners narrows when operating with a high number of faulty leaders, which cause head-of-line blocking and prevent the protocol from promptly committing future leaders.
However, \sysname still maintains a latency advantage of approximately
% (1.7-0.85)/1.7
$50$\% over Cordial Miners, thanks to its direct skip rule (\Cref{sec:protocol}),
which allows \sysname to bypass faulty leaders roughly 2 rounds earlier than Cordial Miners.
Thus, our claim~\ref{claim:c3} holds.

\subsection{Impact of the number of leader slots per round}

Finally, we assess the impact of multiple leaders on \sysname's performance.
We experiment with \sysname parametrized with wave lengths of 4 and 5.
\Cref{fig:leaders} illustrates how \sysname configured with a wave length of 4 rounds performs with 1, 2, and 3 leaders under both normal conditions and scenarios involving 3 crash faults.
\Cref{fig:leaders-w-5} in \Cref{sec:eval-impact-multi-leader} shows the same experiment with wave length 5.
We observe a notable reduction in average latency as the number of leaders increases.
Specifically, when the number of leaders rises from 1 to 3, \sysname's average latency decreases by approximately
% (0.95-0.9)/0.95
% 5\%, 
40ms in the ideal scenario,
and by approximately
% (0.9-0.77)/0.9
% 14\% 
100ms in the crash failure scenario.
This improvement arises because having more leaders per round increases the number of blocks committed directly by leaders, rather than through the causal history of previous leader blocks.
These findings validate our claim~\ref{claim:c4}.
Increasing the number of leaders beyond 3 did not further decrease latency.
This is due to the higher likelihood of failing to commit via the direct decision rule, which may cause head-of-line blocking and delays the commitment of future leaders.

\section{Related Work}\label{sec:related-work}
We compare \sysname with several categories of related work.

\para{Uncertified DAG-based consensus protocols}
The system most similar to \sysname is Cordial Miners~\cite{cordial-miners}. Like \sysname, Cordial Miners operates over an uncertified DAG, where each vertex represents a block that is disseminated with best-effort to all peers~\cite{threshold-clock}. The primary distinction between the two lies in their commit rules. Cordial Miners can commit at most one leader block every five rounds, which leads to significantly higher latency for transactions not included in that leader block. In contrast, \sysname's commit rule allows for a configurable number of blocks to be committed in each round, increasing the number of blocks committed per round and reducing the latency for most transactions. \sysname commits more blocks directly through leaders, rather than relying on the causal history of previous leader blocks. Additionally, Cordial Miners does not provide an implementation or evaluation.

Mysticeti~\cite{mysticeti} is a recent protocol that, like \sysname, operates over an uncertified DAG but in a partially synchronous setting. Mysticeti takes advantage of synchronous periods in the network to commit blocks in three rounds, and like \sysname, it can commit blocks every round. However, unlike \sysname and other asynchronous protocols, Mysticeti completely loses liveness when the network is not synchronous. To maintain liveness in asynchronous conditions, \sysname interprets the DAG differently from Mysticeti. Specifically, \sysname incorporates a global perfect coin into the protocol and modifies the role of several DAG
rounds
to ensure that an asynchronous adversary cannot indefinitely manipulate message schedules to prevent block certificates from forming---an issue that can easily arise in Mysticeti~\cite{consensus-dos}.

\para{Certified DAG-based consensus protocols}
DAG-Rider~\cite{dag-rider}, Tusk~\cite{narwhal-tusk}, and Dumbo-NG~\cite{dumbo-ng} are popular
asynchronous certified DAG-based consensus protocols that use reliable or consistent broadcast to explicitly certify every DAG vertex~\cite{sok-dag}. This approach introduces 3 message delays per DAG round but simplifies the commit rule by ensuring that equivocating DAG vertices never occur. However, this method results in significantly higher latency compared to \sysname. For instance, DAG-Rider requires at least 12 messages to commit a block, while Tusk and Dumbo-NG require 9 messages. By contrast, \sysname can commit in just 4 or 5 message delays when respectively configured with a wave length of 4 and 5. Also, certified DAGs have higher bandwidth and CPU requirements, as validators must disseminate, receive, and verify the cryptographic certificates generated by consistent broadcast. As shown in \Cref{sec:evaluation}, these factors lead to up to 70\% higher latency in comparison to \sysname.

Sailfish~\cite{sailfish}, BBCA-Chain~\cite{bbca-chain}, Fino~\cite{fino}, Shoal~\cite{shoal}, and Shoal++\cite{shoal++} build on the partially synchronous version of Bullshark~\cite{bullshark-partial-sync} through various improvements, including the ability to commit more blocks per round and a relaxation of DAG certification requirements. However, these protocols are limited to partially synchronous environments and, unlike \sysname, they lose liveness in asynchronous conditions.

\para{Linear-chain protocols}
Linear-chain asynchronous protocols such as Das \etal~\cite{das2024}, Pace~\cite{pace}, FIN~\cite{fin}, and SQ~\cite{sui2023} do not leverage an underlying DAG structure. They instead rely on explicit Byzantine consistent broadcast~\cite{cachin2011introduction} and a common coin to elect a leader, whereas \sysname incorporates these components implicitly within the DAG. This leader drives the protocol by constructing a linear chain. Consequently, these protocols do not achieve the same level of throughput and robustness as DAG-based systems~\cite{narwhal-tusk}. Their contributions instead lie primarily in their theoretical foundations. For example, Das \etal introduces a protocol that operates without a trusted setup or the need for public-key cryptography; FIN presents the first constant-time asynchronous consensus (ACS) protocol with $O(n^3)$ messages in both information-theoretic and signature-free settings; and SQ reduces this message complexity to $O(n^2)$.

% \para{Dual-mode consensus protocols}
% Dual-mode consensus protocols, such as Ditto~\cite{jolteon}, Bullshark~\cite{bullshark}, Flexico~\cite{flexico}, and Bolt-Dumbo~\cite{boltdumbo}, feature a main protocol that operates during periods of network synchrony and reverts to a backup protocol under asynchronous conditions. While this design allows for strong performance in stable network environments, dual-mode consensus protocols are inherently complex to design, prove, and implement, making them prone to errors. Consequently, to the best of our knowledge, a fully production-ready implementation of dual-mode consensus has yet to be realized\footnote{
%     The Sui team only implemented the partial-synchronous version of Bullshark~\cite{bullshark-partial-sync,sui-code}.
% }.
% Additionally, a network adversary can force constant mode switching, which severely degrades performance. This limitation arises because the protocol must wait for multiple leader timeouts in its partially synchronous mode before transitioning to its asynchronous mode. Recent advancements, such as Abraxas~\cite{abraxas} and ParBFT~\cite{parbft}, seek to mitigate the costs associated with mode switching by enabling the simultaneous operation of both modes. However, this approach largely increases the number of messages in the system and requires a proof of consistency across the two operational modes.
\section{Conclusion}\label{sec:conclusion}

We introduce \sysname, an asynchronous consensus protocol achieving a new performance milestone: \sysname can process an impressive 350,000 transactions per second in geo-distributed environments with 50 nodes all while keeping latency below 2 seconds, or 100,000 transactions per second with sub-second latency---an achievement that sets a new record in the realm of asynchronous consensus protocols and that was only thought possible for partially-synchronous protocols. The exceptional performance is made possible through a novel commit rule applied over an uncertified DAG that enables commits of multiple leaders every round. This allows \sysname to inherit the robustness and throughput inherent in DAG-based protocols, all while establishing a new standard for the latency of asynchronous consensus protocols.

\ifpublish
  %\begin{credits}
  \subsubsection*{\ackname}
  \subsubsection*{Acknowledgements.} This work is partially supported by Mysten Labs and the Sui Foundation. This work is also partially supported by the Gates Foundation [INV-057591]; under the grant conditions of the Foundation, a Creative Commons Attribution 4.0 Generic License has already been assigned to the Author's Accepted Manuscript.

  %\end{credits}
\fi

\bibliographystyle{splncs04}
\bibliography{references}

\appendix

\section{\sysname Algorithms} \label{sec:algorithms}

This section presents the algorithms used in \sysname in pseudocode format. If a high-level understanding of \sysname is sufficient, then this section can be skipped.
In particular, \Cref{alg:main} specifies the \sysname main algorithms, \Cref{alg:decider} the \sysname decider instance, and \Cref{alg:helper} contains various DAG helper functions.
As a reminder, \sysname operates with a single type of message: a block whose validity is described in \Cref{sec:dag-structure}. Validators hold these blocks in a data structure called $\store$. To access the block(s) of round $r$ authored by validator $v$ of the DAG, we write $\store[r, v]$. If an equivocation happened at a slot $v$, then $\store[r, v]$ may return multiple blocks. To access all blocks of a given round $r$, we write $\store[r, *]$.

Then entry point is the procedure $\Call{ExtendCommitSequence}{\cdot}$ (Line~\ref{alg:main:extend-sequence} of \Cref{alg:main}), which is called by the application layer to extend the commit sequence. This procedure is idempotent and is called by our implementation (\Cref{sec:implementation}) every time the validator receives a new block. This procedure calls $\Call{TryDecide}{\cdot}$ (Line~\ref{alg:main:try-decide} of \Cref{alg:main}) to classify as many blocks as possible as either \scommit or \sskip. The $\Call{TryDecide}{\cdot}$ procedure iterates over all possible leaders and invokes the decider instance (Line~\ref{alg:main:decider} of \Cref{alg:main}) to classify each leader slot. The decider instance is responsible for determining the leader of a given round, certifying blocks, and classifying leader slots. The decider instance uses various helper functions, such as $\Call{IsVote}{\cdot}$ (Line~\ref{alg:line:isvote} of \Cref{alg:helper}), and $\Call{IsCert}{\cdot}$ (Line~\ref{alg:line:iscert} of \Cref{alg:helper}), that are generic utilities for working with the DAG.

%% Main algorithm
\begin{algorithm}[t]
    \caption{\sysname Main Function}
    \label{alg:main}
    \algsize

    \begin{algorithmic}[1]
        \State \wavelength \Comment{Set to at least $4$ (see \Cref{sec:protocol})}
        \State \numleaders \Comment{See \Cref{sec:evaluation} for details}

        \Statex
        \Statex // Idempotent function called by the application layer to extend the commit sequence.
        \Procedure{ExtendCommitSequence}{$r_{committed}, r_{highest}$} \label{alg:main:extend-sequence}
        \State $L \gets \Call{TryDecide}{r_{committed}, r_{highest}}$ \Comment{See \Cref{alg:main:try-decide} below}
        \State $L_{commit} \gets [\;]$ \Comment{Hold decided leader sequence}
        \For{$\status \in L$}
        \If{$\status = \perp$} \textbf{break} \Comment{Stop at the first \sundecided leader} \EndIf
        \If{$\status = \scommit(b_{leader})$}
        \State $L_{commit} \gets L_{commit} \; || \; b_{leader}$
        \EndIf
        \EndFor
        \State \Return $\Call{LinearizeSubDags}{L_{commit}}$ \Comment{See Line~\ref{alg:helper:linearize} of \Cref{alg:helper}}
        \EndProcedure

        \Statex
        \Statex // Try to decide as many proposals as possible, recursively, starting from the latest proposal.
        \Procedure{TryDecide}{$r_{committed}, r_{highest}$} \label{alg:main:try-decide}
        \State $L \gets [\;]$ \Comment{Hold decision of each leader}
        \For{$r \in [r_{highest} \text{ down to } r_{committed}+1]$}
        \For{$l \in [\numleaders-1 \text{ down to } 0]$} \Comment{Loop over all possible leaders}
        \State $i \gets r \; \% \; \wavelength$
        \State $\decider \gets \textsf{Decider}(\wavelength, i, l)$ \Comment{\Cref{alg:decider}} \label{alg:main:decider}
        \State $w \gets \decider.\Call{WaveNumber}{r}$
        \If{$\decider.\Call{ProposeRound}{w} \neq r$} \textbf{continue} \Comment{Skip if not a leader} \EndIf
        \State $\status \gets \decider.\Call{TryDirectDecide}{w}$ \Comment{Apply direct decision rule}
        \If{$\status = \perp$} \label{alg:line:universal:direct-decide-failed}
        \State $\status \gets \decider.\Call{TryIndirectDecide}{w}$ \Comment{Apply indirect decision rule}
        \EndIf
        \State $L \gets \status \; || \; L$
        \EndFor
        \EndFor
        \State \Return $L$ \Comment{May still contain \sundecided leaders}
        \EndProcedure
    \end{algorithmic}
\end{algorithm}

%% Decider logic
\begin{algorithm}[!ht]
    \caption{\sysname Decider Instance}
    \label{alg:decider}
    \algsize

    \begin{algorithmic}[1]
        \State \wavelength \Comment{Set to at least $4$ (see \Cref{sec:protocol})}
        \State \waveoffset \Comment{Offset creating overlapping waves (\Cref{sec:overview})}
        \State \leaderoffset \Comment{Each decider operates on a unique leader slot}

        \Statex
        \Procedure{WaveNumber}{$r$}
        \State \Return $(r - \waveoffset) / \wavelength$
        \EndProcedure

        \Statex
        \Procedure{ProposeRound}{$w$}
        \State \Return $w \cdot \wavelength + \waveoffset$ \Comment{See \Cref{fig:rounds}}
        \EndProcedure

        \Statex
        \Procedure{CertifyRound}{$w$}
        \State \Return $w \cdot \wavelength + \wavelength - 1 + \waveoffset$ \Comment{See \Cref{fig:rounds}}
        \EndProcedure

        \Statex
        \Procedure{VoteRound}{$w$}
        \State \Return $\self.\Call{CertifyRound}{w} - 1$ \Comment{See \Cref{fig:rounds}}
        \EndProcedure

        \Statex
        \Procedure{LeaderBlock}{$w$}
        \State $r_{propose}, r_{certify} \gets \self.\Call{ProposeRound}{w}, \self.\Call{CertifyRound}{w}$
        \State $c \gets \Call{CombineCoinShares}{\{ b.share \text{ s.t. } b \in \store[r_{certify}, *] \}}$ \Comment{Common coin}
        \State $l \gets c + \leaderoffset$ \Comment{Modulo committee size}
        \State \Return $\store[r_{propose}, l]$ \Comment{May return more than one block in case of equivocations}
        \EndProcedure

        \Statex
        \Procedure{SkippedLeader}{$w, b_{leader}$}
        \State $r_{vote} \gets \self.\Call{VoteRound}{w}$
        \State \Return $|\{ \neg \Call{IsVote}{b, b_{leader}} \text{ s.t. } b \in \store[r_{vote}, *] \}| \geq 2f+1$
        \EndProcedure

        \Statex
        \Procedure{SupportedLeader}{$w, b_{leader}$}
        \State $r_{certify} \gets \self.\Call{CertifyRound}{w}$
        \State \Return $|\{ \Call{IsCert}{b, b_{leader}} \text{ s.t. } b \in \store[r_{certify}, *] \}| \geq 2f+1$
        \EndProcedure

        \Statex
        \Procedure{TryDirectDecide}{$w$}
        \For{$b_{leader} \in \self.\Call{LeaderBlock}{w}$} \Comment{Loop over equivocations}
        \If{$\self.\Call{SkippedLeader}{w, b_{leader}}$} \Return $\sskip(w)$ \EndIf
        \If{$\self.\Call{SupportedLeader}{w, b_{leader}}$} \Return $\scommit(b_{leader})$ \EndIf
        \EndFor
        \State \Return $\perp$
        \EndProcedure

        \Statex
        \Procedure{TryIndirectDecide}{$w, S$}
        \State $s_{anchor} \gets \text{ find first } s \in S \text{ s.t. } r_{certify} < s.round \wedge s \neq \sskip(w)$
        \If{$s_{anchor} = \scommit(b_{anchor})$}
        \If{$\exists b_{leader} \in \self.\Call{LeaderBlock}{w} \text{ s.t. } \Call{IsCertifiedLink}{b_{anchor}, b_{leader}}$}
        \State \Return $\scommit(b_{leader})$
        \Else
        \State \Return $\sskip(w)$
        \EndIf
        \EndIf
        \State \Return $\perp$ \Comment{The anchor is \sundecided or not found}
        \EndProcedure
    \end{algorithmic}
\end{algorithm}

%% DAG helper functions
\begin{algorithm}[!ht]
    \caption{DAG Helper Functions}
    \label{alg:helper}
    \algsize

    \begin{algorithmic}[1]
        \Statex
        \Procedure{IsVote}{$b_{vote}, b_{leader}$} \label{alg:line:isvote}
        \Function{VotedBlock}{$b, id, r$}
        \If{$r \geq b.round$} \Return $\perp$ \EndIf
        \For{$b' \in b.parents$}
        \If{$(b'.author, b'.round) = (id, r)$} \Return $b'$ \EndIf
        \State $res \gets \Call{VotedBlock}{b', id, r}$
        \If{$res \neq \perp$} \Return $res$ \EndIf
        \EndFor
        \State \Return $\perp$
        \EndFunction
        \State $(id, r) \gets (b_{leader}.author, b_{leader}.round)$
        \State \Return $\Call{VotedBlock}{b_{vote}, id, r} = b_{leader}$
        \EndProcedure

        \Statex
        \Procedure{IsCert}{$b_{cert}, b_{leader}$} \label{alg:line:iscert}
        \State $res \gets |\{b \in b_{cert}.parents: \Call{IsVote}{b, b_{leader}}\}|$
        \State \Return $res \geq 2f+1$
        \EndProcedure

        \Statex
        \Procedure{IsLink}{$b_{old}, b_{new}$}
        \State \Return exists a sequence of $k\in\mathbb{N}$ blocks $b_1, \dots, b_k$ s.t. $b_1 = b_{old}, b_k = b_{new}$ and $\forall j \in [2, k]: b_j \in \bigcup_{r \geq 1} \store[r, *] \wedge b_{j-1} \in b_{j}.parents$
        \EndProcedure

        \Statex
        \Procedure{IsCertifiedLink}{$b_{anchor}, b_{leader}$}
        \State $w \gets \Call{WaveNumber}{b_{leader}.round}$
        \State $B \gets \Call{GetDecisionBlocks}{w}$
        \State \Return $\exists b \in B \text{ s.t. } \Call{IsCert}{b, b_{leader}} \wedge \Call{IsLink}{b, b_{anchor}}$
        \EndProcedure

        \Statex
        \Procedure{LinearizeSubDags}{$L$} \label{alg:helper:linearize}
        \State $O \gets [\;]$ \Comment{Hold output sequence}
        \For{$b_{leader} \in L$}
        \State $B \gets \{b \in \bigcup_{r \geq 1} \store[r, *] \text{ s.t. } \Call{IsLink}{b, b_{leader}} \wedge b \notin O \wedge b \text{ not already output } \}$
        \For{$b \in B \text{ in any deterministic order}$}
        \State $O \gets O \; || \; b$
        \EndFor
        \EndFor
        \State \Return $O$
        \EndProcedure
    \end{algorithmic}
\end{algorithm}

\section{Example of \sysname Execution} \label{sec:example}

This section completes \Cref{sec:protocol} by guiding readers through the protocol execution, step-by-step protocol, using the example illustrated in \Cref{fig:example}. This figure showcases \sysname with four validators, labeled as ($v_0$, $v_1$, $v_2$, $v_3$), operating over a wavelength of five rounds, with two leader slots allocated per round.
As mentioned in \Cref{sec:decision-rule}, all proposer slots are initially in the \sundecided state. The validator holds in memory the portion of the DAG depicted in \Cref{fig:example} and attempts to classify as many blocks in the leader slots as possible as either \scommit or \sskip.

\begin{figure}[t]
    \centering
    \begin{subfigure}[t]{\textwidth}
        \centering
        \includegraphics[width=0.75\textwidth]{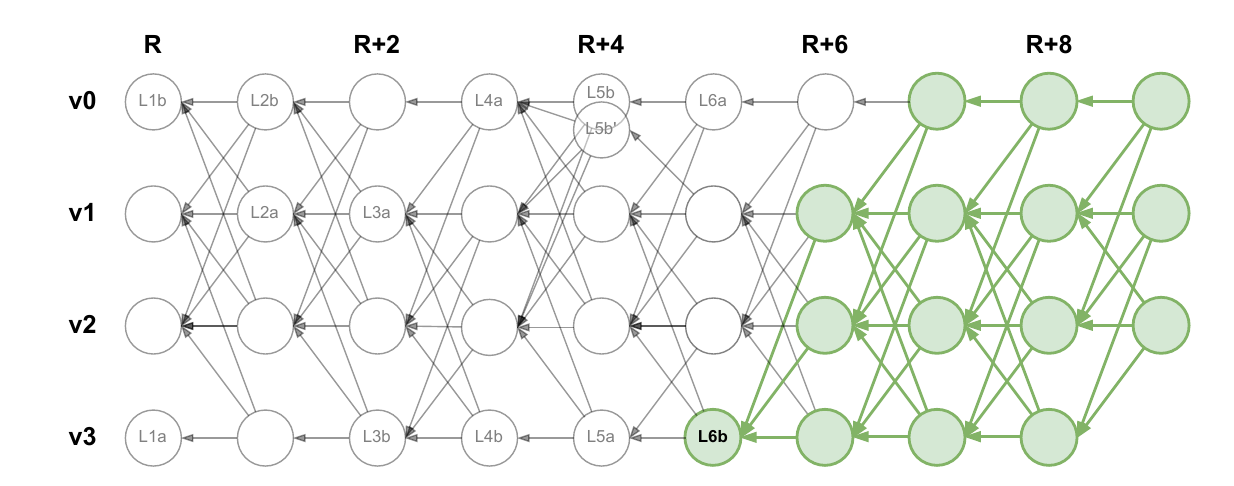}
        \caption{Example of direct decision rule. $L_{6b}$ is classified as \scommit.}
        \label{fig:example-1}
    \end{subfigure}
    \begin{subfigure}[t]{\textwidth}
        \centering
        \includegraphics[width=0.75\textwidth]{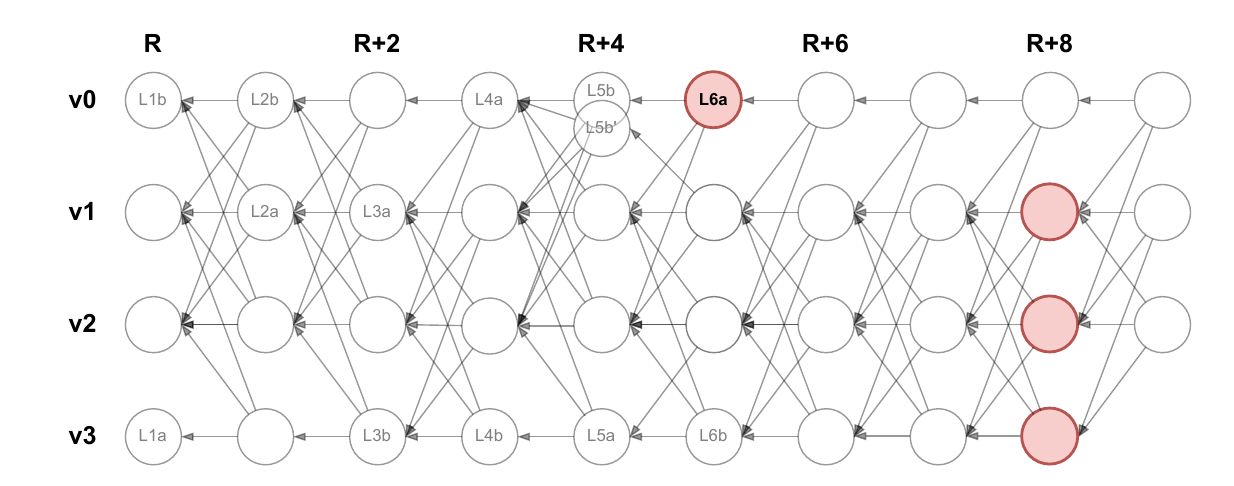}
        \caption{Example of direct decision rule. $L_{6a}$ is classified as \sskip.}
        \label{fig:example-2}
    \end{subfigure}
    \begin{subfigure}[t]{\textwidth}
        \centering
        \includegraphics[width=0.75\textwidth]{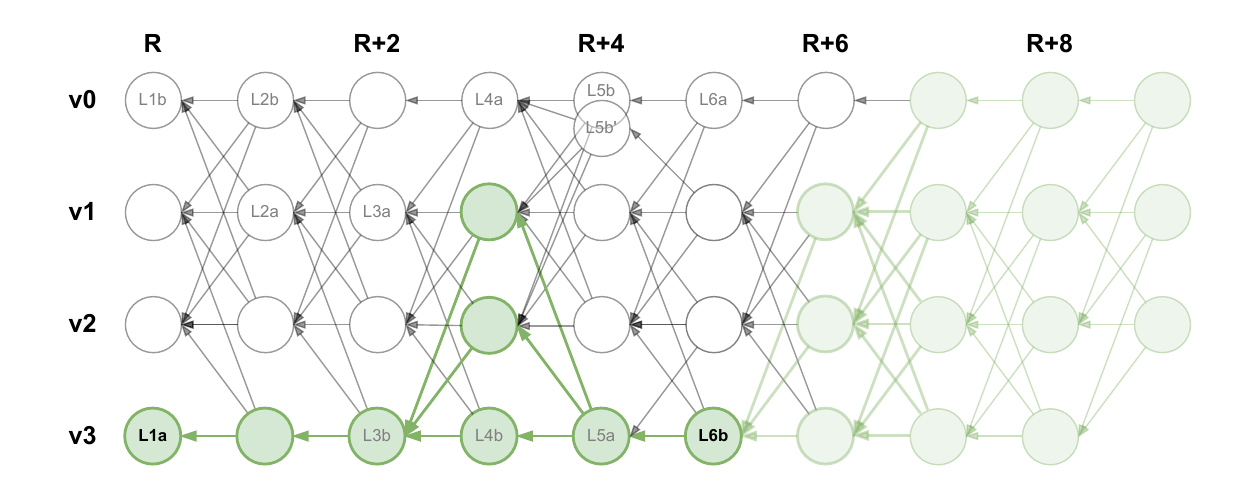}
        \caption{Example of indirect decision rule. $L_{1b}$ is classified as \scommit.}
        \label{fig:example-3}
    \end{subfigure}
    \caption{
        Example of \sysname execution with 4 validators, configured with a wave length of 5 rounds and 2 leader slots per round.
    }
    \label{fig:example-detailed}
\end{figure}

The first step for the validator is to identify the leader slots by reconstructing the global perfect coin for each round. As described in \Cref{sec:decision-rule} (step 1), the validator derives the following leader slots:
$$ [L_{6b}, L_{6a}, L_{5b}, L_{5a}, L_{4b}, L_{4a}, L_{3b}, L_{3a}, L_{2b}, L_{2a}, L_{1b}, L_{1a}] $$

Next, the validator attempts to classify each leader slot as either \scommit or \sskip using the \emph{direct decision rule} (step 2), starting with the highest slot, $L_{6b}$. As shown in \Cref{fig:example-1}, the validator classifies $L_{6b}$ as \scommit since it is certified by $2f+1$ blocks from round $R+9$, specifically $B_{(v_0,R+9)}$, $B_{(v_1,R+9)}$, and $B_{(v_2,R+9)}$.

The validator then proceeds to $L_{6a}$. As illustrated in \Cref{fig:example-2}, it classifies this block as \sskip because $2f+1$ blocks from round $R+8$ ($B_{(v_1, R+8)}$, $B_{(v_2, R+8)}$, and $B_{(v_3, R+8)}$) do not vote for $L_{6a}$.

Next, the validator examines $L_{5b}$. In its local view of the DAG, it encounters two equivocations for this leader slot: $L_{5b}$ and $L_{5b}'$. The validator then invokes the function $\Call{IsVote}{\cdot}$ (shown in Line~\ref{alg:line:isvote} of \Cref{alg:helper}, \Cref{sec:algorithms}) to determine which of these equivocations, if any, receives votes from the blocks of round $R+7$. For each block of this round, the validator conducts a depth-first search starting from the block, following its hash references to see if it first encounters $L_{5b}$ or $L_{5b}'$.

In this example, the validator first targets $B_{(v_0,R+7)}$ and finds it votes for $L_{5b}$. Upon targeting $B_{(v_1,R+7)}$, it discovers a vote for $L_{5b}'$. Continuing this process with $B_{(v_2,R+7)}$ and $B_{(v_3,R+7)}$ also leads to a vote for $L_{5b}'$. Consequently, since there are $2f+1$ blocks from round $R+7$ that do not vote for $L_{5b}$, the validator classifies it as \sskip. And since there are at least $2f+1$ blocks from round $R+8$ ($B_{(v_0,R+8)}$, $B_{(v_1,R+8)}$, $B_{(v_2,R+8)}$, and $B_{(v_3,R+8)}$) that certify $L_{5b}'$, the validator classifies it as \scommit.

The validator then moves on to $L_{5a}$, classifying it as \scommit as it receives sufficient certification from blocks of round $R+8$ ($B_{(v_0,R+8)}$, $B_{(v_1,R+8)}$, $B_{(v_2,R+8)}$, and $B_{(v_3,R+8)}$).
Similarly, the validator classifies both $L_{4b}$ and $L_{4a}$ as \scommit since they are also certified by $B_{(v_0,R+7)}$, $B_{(v_1,R+7)}$, $B_{(v_2,R+7)}$, and $B_{(v_3,R+7)}$.
This reasoning applies to the slots $L_{3b}$ and $L_{3a}$, which are certified by blocks of round $R+6$, as well as to $L_{2b}$ and $L_{2a}$, certified by blocks of round $R+5$.
Finally, $L_{1b}$ is certified by $B_{(v_0,R+4)}$ (through both $L_{5b}$ and $L_{5b}'$), $B_{(v_1,R+4)}$, $B_{(v_2,R+4)}$, and $B_{(v_3,R+4)}$.

However, the direct commit rule fails to classify $L_{1a}$. There are neither $2f+1$ blocks from round $R+3$ that do not vote for it (which would classify it as \sskip) nor $2f+1$ blocks from round $R+4$ that certify it (preventing its classification as \scommit). Therefore, the validator turns to the \emph{indirect decision rule} (step 3) to classify $L_{1a}$.

As specified in \Cref{sec:decision-rule}, the validator first seeks the anchor for $L_{1a}$, which is $L_{6b}$. The anchor is defined as the first block with a round number $r' > r+5$ that is classified as either \sundecided or \scommit. Consequently, $L_{6a}$ cannot serve as the anchor for $L_{1b}$, making $L_{6b}$ its anchor.
The validator then checks for an existing certificate over $L_{1a}$ that is referenced by the causal history of its anchor, $L_{6b}$. As illustrated in \Cref{fig:example-3}, in this case, $L_{1b}$ is certified by $L_{5a}$, which references $2f+1$ votes for $L_{1b}$ ($B_{(v_1,R+3)}$, $B_{(v_2,R+3)}$, and $B_{(v_3,R+3)}$). Thus, $L_{1b}$ is classified as \scommit. Without such a certificate, the validator would have classified $L_{1b}$ as \sskip.

\section{Security Proofs}\label{sec:proofs}

This section proves the correctness of \sysname, by showing that \sysname satisfies the properties of Byzantine Atomic Broadcast (BAB) from \Cref{sec:overview}. We prove the correctness of both the $4$-round and $5$-round versions of \sysname. We start with results that hold for both versions (these are mostly safety-related results) in \Cref{app:common-proofs} and continue with version-specific results in Appendices~\ref{app:w5-proofs} and~\ref{app:w4-proofs}.

\subsection{Common Proofs for $w = 4$ and $w = 5$}\label{app:common-proofs}

We start by proving the Total Order and Integrity properties of BAB. A crucial intermediate result towards these properties is that all honest validators have consistent commit sequences, i.e., the committed sequence of one honest validator is a prefix of another's, or vice-versa. This is shown in Lemmas~\ref{lem:agree-commit} and~\ref{lem:consistent}, which the following lemmas and observations build up to.

\begin{lemma} \label{lma1_cert_path}
    If in round $r$, $2f+1$ blocks from distinct validators certify a block $b$, then all blocks at future rounds $r'>r$ will have a path to a certificate for $b$ from round $r$.
\end{lemma}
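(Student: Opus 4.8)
The goal is to show that once $2f+1$ blocks from distinct validators certify $b$ in round $r$, every block at every later round $r' > r$ has a path back to some certificate for $b$ at round $r$. The natural approach is induction on $r'$, exploiting the structural requirement from \Cref{sec:dag-structure} that every valid block references at least $2f+1$ distinct blocks from the immediately preceding round.

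For the base case $r' = r+1$: take any block $b'$ at round $r+1$. By validity, $b'$ references at least $2f+1$ distinct blocks from round $r$. The $2f+1$ certifiers of $b$ also live in round $r$. Since $n = 3f+1$, two sets of size $2f+1$ drawn from the $n$ validators of round $r$ must intersect in at least $(2f+1)+(2f+1) - (3f+1) = f+1 \geq 1$ validators — so $b'$ directly references at least one certificate for $b$ from round $r$, giving the desired path of length one. For the inductive step, assume every block at round $r' \geq r+1$ has a path to a round-$r$ certificate for $b$. Any block $b''$ at round $r'+1$ references at least $2f+1$ (hence at least one) block from round $r'$; by the induction hypothesis that referenced block has a path to a round-$r$ certificate for $b$, and prepending the edge $b'' \to$ (that block) extends it to a path from $b''$. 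This closes the induction.

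One subtlety to be careful about: the statement says ``$2f+1$ blocks from distinct validators'' — the distinctness matters so that the quorum-intersection count is valid even in the presence of equivocating Byzantine validators (we are counting authors, not blocks). I would make explicit that the certifiers are indexed by distinct validators and that block references in a valid block are also to distinct validators of the previous round, so the intersection argument applies to author-sets of size $2f+1$ inside an author-universe of size $n = 3f+1$. A second minor point is that a ``certificate for $b$'' is itself a round-$r$ block (namely one of the $2f+1$ certifiers), so ``path to a certificate'' includes the degenerate case where the block in question \emph{is} such a certificate; this is only relevant if one wanted the lemma for $r' = r$, which is not claimed, so no special handling is needed.

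The main obstacle — really the only non-bookkeeping part — is getting the quorum-intersection step stated cleanly: that any block at round $r+1$, by virtue of referencing $2f+1$ distinct round-$r$ validators, must reference at least one of the $2f+1$ certifying validators. Everything after that is a routine induction threading paths through single parent edges. I would keep the write-up to the base-case intersection argument plus a one-line inductive step.
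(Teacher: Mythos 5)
Your proposal is correct and follows essentially the same route as the paper: induction on the round, with the base case $r'=r+1$ settled by quorum intersection between the $2f+1$ certifying blocks and the $2f+1$ round-$r$ references of any round-$(r+1)$ block, and the inductive step extending a path through any single parent. Your inductive step is in fact marginally cleaner than the paper's (which redundantly re-invokes quorum intersection even though the hypothesis already covers every round-$r'$ block), and your remark that the intersection should be counted over distinct authors is a reasonable clarification consistent with the paper's intent.
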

\begin{proof}
    We prove the lemma by induction on $r'$. The base case is $r' = r+1$. Let $b'$ be a block at round $r'$. Since $b'$ points to $2f+1$ blocks at round $r$, by quorum intersection, $b'$ must point to at least one of the certificates for $b$.

    For the induction case, assume the lemma holds up to round $r'$ and consider the case of round $r'+ 1$. Let $b'$ be a block at round $r'+1$. By the induction hypothesis, $2f+1$ blocks at round $r'$ have paths to round-$r$ certificates for $b$. Since $b'$ points to $2f+1$ blocks from round $r'$, by quorum intersection, $b'$ must point to at least one block that has a path to a round-$r$ certificate for $b$.
\end{proof}

\begin{observation}\label{obs:no-equiv-vote}
    A block cannot vote for more than one block proposal from a given validator, in a given round.
\end{observation}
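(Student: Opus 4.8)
The plan is to argue directly from the definition of a vote, i.e.\ from the procedure $\Call{IsVote}{\cdot}$ of \Cref{alg:helper}. Recall that $\Call{IsVote}{b_{vote}, b_{leader}}$ first sets $(id, r) \gets (b_{leader}.author, b_{leader}.round)$ and then returns $\mathsf{true}$ precisely when the recursive search $\Call{VotedBlock}{b_{vote}, id, r}$ returns the block $b_{leader}$ itself. So the whole statement reduces to understanding what $\Call{VotedBlock}{\cdot}$ can return.

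First I would establish that, for fixed $b_{vote}$, $id$, and $r$, the call $\Call{VotedBlock}{b_{vote}, id, r}$ is a well-defined, terminating computation that returns at most one value, namely either $\perp$ or a single block of $\store[r, id]$. Termination and well-definedness follow from block validity (\Cref{sec:dag-structure}): every parent of a valid block has a strictly smaller round, so each recursive call is made only on a parent whose round is strictly smaller than that of the current block, the quantity $b.round$ strictly decreases along every branch, and the guard ``$r \ge b.round$'' is eventually reached. Since $\Call{VotedBlock}{\cdot}$ returns as soon as it encounters \emph{the first} block (in the fixed, per-block parent-list order that drives the depth-first search) whose author-round pair equals $(id, r)$, and returns $\perp$ if no such block is reached, its output $b^\star$ is a single, deterministically-determined object.

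Then I would conclude as follows. Suppose validator $v_k$ equivocates at round $r$, producing distinct proposals $b^{(1)}, b^{(2)}, \dots \in \store[r, v_k]$, and fix any block $b_{vote}$. By the previous paragraph, $\Call{VotedBlock}{b_{vote}, v_k, r}$ equals one specific value $b^\star$. Hence $\Call{IsVote}{b_{vote}, b^{(j)}}$ holds for the index $j$ with $b^{(j)} = b^\star$ (if $b^\star \neq \perp$) and for no other index, and for no index at all when $b^\star = \perp$. Therefore $b_{vote}$ votes for at most one round-$r$ proposal of $v_k$, which is exactly the claim.

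I do not expect a genuine obstacle here: the observation is essentially a syntactic consequence of $\Call{VotedBlock}{\cdot}$ being a deterministic ``find-first'' traversal of a causal history. The only point deserving a line of care is its termination and well-definedness, so that ``the first block encountered'' is unambiguous; as noted, this rests on the monotonicity of rounds along parent edges guaranteed by block validity.
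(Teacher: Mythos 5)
Your proposal is correct and follows essentially the same route as the paper, which also argues ``by construction'' that the deterministic depth-first traversal in $\Call{IsVote}{\cdot}$/$\Call{VotedBlock}{\cdot}$ returns at most one block from a given (validator, round) pair, namely the first one encountered. Your additional remarks on termination and well-definedness via strictly decreasing rounds merely make explicit what the paper leaves implicit.
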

\begin{proof}
    This is by construction. Honest validators interpret support in the DAG through deterministic depth-first traversal. So even if a block $b$ in the vote round has paths to multiple leader round blocks from the same validator $v$ (i.e., equivocating blocks), all honest validators will interpret $b$ to vote for only one of $v$'s blocks (the first block to appear in the depth-first traversal starting from $b$).
\end{proof}

\begin{lemma}\label{lem:one-cert-per-validator}
    At most a single block per round from the same validator can be certified.
\end{lemma}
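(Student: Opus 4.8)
The plan is to argue by contradiction using quorum intersection, with \Cref{obs:no-equiv-vote} delivering the final blow. Suppose some validator $v$ proposes two \emph{distinct} blocks $b$ and $b'$ in round $r$ and that both are certified. Unfolding the definition of a certificate, there is a block of the \certify round whose causal history contains a set $Q$ of $2f+1$ blocks of the \vote round \emph{from distinct validators}, each a vote for $b$; symmetrically, there is a block of the \certify round whose causal history contains a set $Q'$ of $2f+1$ blocks of the \vote round from distinct validators, each a vote for $b'$. I will abuse notation and also write $Q, Q'$ for the corresponding sets of authoring validators, so $|Q|, |Q'| \ge 2f+1$.

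Next I would invoke the standard quorum-intersection bound: since $|Q|,|Q'| \ge 2f+1$ and there are $n = 3f+1$ validators in total, $|Q \cap Q'| \ge (2f+1)+(2f+1)-(3f+1) = f+1$. As at most $f$ validators are Byzantine, $Q \cap Q'$ contains at least one honest validator $h$. Because $h$ is honest it authors exactly one block of the \vote round, say $b_h$. Since $h \in Q$, some block of the \vote round authored by $h$ lies in the causal history of the certifying block of $b$ and is a vote for $b$; as $h$ is honest, that block can only be $b_h$, so $b_h$ votes for $b$. The identical argument using $h \in Q'$ shows $b_h$ votes for $b'$. Hence the single block $b_h$ is simultaneously a vote for $b$ and for $b'$, two distinct proposals of $v$ in round $r$, contradicting \Cref{obs:no-equiv-vote}.

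The step I expect to need the most care is the very first one: pinning down that a certificate yields $2f+1$ votes \emph{from distinct validators}, not merely $2f+1$ distinct blocks of the \vote round. This is essential, because a lone equivocating Byzantine validator could otherwise contribute many \vote-round blocks all pointing to $b$, inflating the count without forcing any honest validator to vote for $b$, and the final contradiction would evaporate. The distinct-validator reading is exactly the one already presumed by \Cref{lma1_cert_path}, and it follows from block validity, which requires every block to reference blocks of the previous round from $2f+1$ distinct validators. Once that convention is fixed, the remainder is the routine two-quorum intersection above, and no case analysis on the structure of equivocations is needed beyond the single appeal to \Cref{obs:no-equiv-vote}.
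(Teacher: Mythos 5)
Your proof is correct and follows essentially the same route as the paper's: assume two certified blocks from the same validator in one round, intersect the two quorums of $2f+1$ votes to find an honest validator whose single \vote-round block must vote for both, and contradict \Cref{obs:no-equiv-vote}. Your added care about counting votes by distinct validators rather than distinct blocks is a reasonable refinement of a point the paper's proof leaves implicit, but it does not change the argument.
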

\begin{proof}
    Assume by contradiction that in a given round $r$, there exist two distinct blocks $b$ and $b'$ from the same validator $v$ such that both $b$ and $b'$ are certified. This means that there exist round-$(r+w-1)$ blocks $c_b$ and $c_{b'}$ that certify $b$ and $b'$, respectively. $c_b$ and $c_{b'}$ must point to $2f+1$ votes for $b$ and $b'$, respectively. By quorum intersection, there exists an honest validator that has voted for both $b$ and $b'$ in the vote round. Since honest validators only produce a single block per round, this implies that there exists a block that votes for both $b$ and $b'$, contradicting \Cref{obs:no-equiv-vote}.

\end{proof}

\begin{observation}\label{obs:exists-certificate}
    If an honest validator $v$ directly or indirectly commits a block $b$, then $v$'s local DAG contains a certificate for $b$.
\end{observation}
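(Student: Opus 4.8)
The plan is to do a short case analysis on \emph{how} $v$ classified $b$ as $\scommit$, tracing the two branches of the decider in \Cref{alg:decider}. In both branches the commit condition, as written in the algorithm, literally exhibits a block that is a certificate for $b$; the only thing left to add is that this certificate block, together with the $\geq 2f+1$ voting blocks it depends on, really is present in $v$'s $\store$. For this last point I will invoke the invariant from \Cref{sec:dag-structure} that an honest validator only references (and only admits into its DAG) a block after it has downloaded and validated that block's entire causal history; hence the local $\store$ is causally closed, and in particular any $c \in \store$ brings with it all the votes and leader blocks in its causal past.

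First, suppose $v$ committed $b$ directly, i.e.\ $\Call{TryDirectDecide}{w}$ returned $\scommit(b)$ for the wave $w$ with $b \in \Call{LeaderBlock}{w}$. Reading the procedure, this output is only produced when $\Call{SupportedLeader}{w, b}$ holds, which by definition requires at least $2f+1$ blocks $c \in \store[r_{certify}, *]$ with $\Call{IsCert}{c, b}$. Each such $c$ is, by the definition of a certificate (\Cref{sec:decision-rule}, \Cref{alg:helper}), a certificate for $b$, and since $c \in \store$ its causal history---including the $\geq 2f+1$ voting blocks witnessing $\Call{IsCert}{c, b}$---lies in $\store$ as well. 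Hence $v$'s local DAG contains a certificate for $b$ (in fact at least $2f+1$ of them).

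Second, suppose $v$ committed $b$ indirectly, i.e.\ $\Call{TryIndirectDecide}{w, \cdot}$ returned $\scommit(b)$. By inspection this happens only when the chosen anchor slot $s_{anchor}$ was already classified $\scommit(b_{anchor})$ and $\Call{IsCertifiedLink}{b_{anchor}, b}$ holds. Unfolding $\Call{IsCertifiedLink}{\cdot}$ (\Cref{alg:helper}), the latter requires a block $c$ among the decision blocks of $w$ with $\Call{IsCert}{c, b}$ and $\Call{IsLink}{c, b_{anchor}}$; again $c$ is a certificate for $b$, and $c$ together with its causal past is in $v$'s $\store$---either because the algorithm read it off $\store$ directly, or because $c$ is in the causal history of $b_{anchor}$, which $v$ has already downloaded. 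So once more $v$'s local DAG contains a certificate for $b$.

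The step I expect to require the most care is not the case split itself but making the causal-closure appeal precise: I want to state explicitly that ``$\store$ contains a certificate for $b$'' means the certificate block \emph{and} its $\geq 2f+1$ vote-parents are genuinely stored at $v$, not merely named by a hash, since several later arguments (e.g.\ the consistency of the direct and indirect decision rules) rely on this reading. Note also that no recursion is needed: although the indirect rule is defined in terms of an already-decided anchor, the certificate it produces is a certificate for $b$ itself, extracted directly from the $\Call{IsCertifiedLink}{\cdot}$ test, so there is nothing to unwind about how $b_{anchor}$ came to be committed.
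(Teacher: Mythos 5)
Your proposal is correct and follows the same route as the paper, whose proof simply states that the claim follows immediately from the direct and indirect commit rules: unfolding $\Call{SupportedLeader}{\cdot}$ for the direct case and $\Call{IsCertifiedLink}{\cdot}$ for the indirect case exhibits the certificate exactly as you describe. Your added remark about causal closure of the local $\store$ is a harmless (and reasonable) elaboration of what the paper treats as implicit in the block-validation rules of \Cref{sec:dag-structure}.
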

\begin{proof}
    This follows immediately from our direct and indirect commit rules.
\end{proof}

\begin{observation}\label{obs:agree-order}
    Honest validators agree on the sequence of leader slots.
\end{observation}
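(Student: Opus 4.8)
The plan is to show that the ordered list of leader slots is a deterministic function of the global perfect coin, and that every honest validator reconstructs that coin identically; agreement on the sequence then follows immediately.

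First, I would recall from Step~1 of \Cref{sec:decision-rule} and procedure $\Call{LeaderBlock}{\cdot}$ of \Cref{alg:decider} that, for a wave $w$ whose propose round is $r$, the leader slot of round $r$ is the pair $(v_\ell, r)$, where $\ell$ is obtained from the coin value reconstructed from the shares carried by the blocks of the corresponding certify round (adding the fixed leader offset and reducing modulo the committee size); when more than one leader per round is configured, the same coin value also fixes the order among the slots of round $r$. Thus the per-round tuple of leader slots, together with its internal order, depends only on the coin value of the corresponding wave, and the global sequence is obtained by listing waves in descending round order and, within a round, in the coin-induced order.

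Next, I would argue that any two honest validators that reconstruct the coin for wave $w$ obtain the same value. By the block-validity rules of \Cref{sec:dag-structure}, each coin share is individually verifiable, so honest validators only ever combine valid shares; the at most $f$ shares authored by Byzantine validators are either valid---hence, by the uniqueness of the underlying adaptively secure threshold signature, consistent with every other valid share for that round---or rejected outright. Since the coin is \emph{perfect}, any quorum of $2f+1$ valid shares for a given round reconstructs the same value, independently of which quorum is chosen. Hence the map sending a round to its ordered tuple of leader slots is a single global function, identical across all honest validators' views.

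Finally, since each honest validator applies the same fixed ordering rule---descending round, then coin-induced order---to the same per-round leader slots, the sequences two honest validators produce can differ only in the range of rounds they cover (dictated by their local $r_{committed}$ and $r_{highest}$) and coincide on the common range; in particular, one is a prefix of the other. I expect the only step requiring genuine care to be the invocation of the uniqueness and robustness of the perfect coin, which is what rules out two honest validators reconstructing different values from different $2f+1$-share quorums; everything else follows directly from the determinism of the decider pseudocode.
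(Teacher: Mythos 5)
Your proposal is correct and follows essentially the same route as the paper, whose proof is a one-line appeal to the properties of the common coin (\Cref{sec:model}); you simply spell out the two ingredients that appeal hides, namely that the slot selection and ordering are a deterministic function of the coin value and that the perfect coin reconstructs to the same value from any $2f+1$ valid shares.
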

\begin{proof}
    This follows immediately from the properties of the common coin, see~\Cref{sec:model}.
\end{proof}

\begin{lemma} \label{lma4_honest_unique}
    If an honest validator $v$ commits some block $b$ in a slot $s$, then no other honest validator decides to directly skip the slot $s$.
\end{lemma}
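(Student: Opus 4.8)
The plan is to argue by contradiction, combining the quorum-intersection facts already established with the structural meaning of votes and certificates in the DAG. Suppose honest validator $v$ commits block $b$ in slot $s = (v_{\mathit{leader}}, r)$ while another honest validator $v'$ directly skips $s$. By \Cref{obs:exists-certificate}, since $v$ committed $b$, there is a certificate $c_b$ for $b$ in some honest view; $c_b$ references $2f+1$ blocks of the vote round $r_{vote}$, each of which is a vote for $b$. On the other hand, $v'$ directly skipped $s$, which by the definition of $\Call{SkippedLeader}{\cdot}$ means $v'$ observed $2f+1$ blocks from $r_{vote}$ that are \emph{not} votes for $b$ — but one has to be careful here, because the direct-skip rule as written skips the whole slot $s$ (returning $\sskip(w)$) as soon as it sees $2f+1$ non-votes for \emph{some} equivocating block $b_{leader}$ in that slot, and $b$ might be a different equivocating block in the same slot. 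So the first step is to pin down exactly which block the $2f+1$ non-votes are non-votes for.

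The key step is then the following case analysis. First, note that the $2f+1$ non-vote blocks witnessed by $v'$ and the $2f+1$ vote blocks referenced by $c_b$ are both drawn from round $r_{vote}$, and $2f+1 + 2f+1 > 3f+1 = n$, so by quorum intersection there is an honest validator $v''$ whose (unique, by the one-block-per-round rule) round-$r_{vote}$ block lies in both sets. If the direct-skip witnessed by $v'$ is a skip of $b$ itself, then $v''$'s block both votes for $b$ (as a member of $c_b$'s referenced votes) and does not vote for $b$ (as a member of $v'$'s $2f+1$ non-votes) — an immediate contradiction. If instead the direct-skip witnessed by $v'$ concerns a different equivocating block $b' \neq b$ in slot $s$ (both authored by $v_{\mathit{leader}}$ at round $r$), then I invoke \Cref{obs:no-equiv-vote}: $v''$'s block cannot vote for both $b$ and $b'$, so being a vote for $b$ it is automatically a non-vote for $b'$, which is consistent and yields no contradiction from $v''$ alone — so this sub-case needs the extra observation that $\Call{SkippedLeader}{\cdot}$ loops over equivocations and returns $\sskip$ for any of them; here I should instead argue that $v$'s direct (or indirect) commit of $b$ required a certificate for $b$, hence $2f+1$ votes for $b$, hence (by \Cref{obs:no-equiv-vote} and quorum intersection) at most $f$ honest validators could have voted for $b'$, so $b'$ cannot itself be certified, and moreover $v'$ skipping $s$ on account of $b'$ is consistent with $b$ being committed — meaning I actually need to re-read the statement: the lemma says "commits some block $b$ in slot $s$" vs. "directly skip the slot $s$", so if $v'$ skips the \emph{whole slot} then in particular it has decided $\sskip$ for the slot, contradicting $v$'s $\scommit(b)$ for the same slot; the contradiction must come from showing the $2f+1$ non-votes cannot exist for \emph{every} block of the slot simultaneously with a certificate for $b$.

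Concretely, I would finish as follows: $v'$ directly skips slot $s$ only if, for the particular block $b_{leader}$ it examined (which without loss of generality we may take to be $b$, since if $b$ is certified then $b$ is in $v$'s $\store[r, v_{\mathit{leader}}]$ and by the DAG download rule $b$ is eventually in every honest view, and the loop in $\Call{TryDirectDecide}{\cdot}$ will examine $b$), it sees $2f+1$ round-$r_{vote}$ blocks that are non-votes for $b$. Intersecting these $2f+1$ non-votes-for-$b$ with the $2f+1$ votes-for-$b$ guaranteed by the certificate $c_b$ (\Cref{obs:exists-certificate}), quorum intersection over the $n = 3f+1$ validators gives an honest validator whose single round-$r_{vote}$ block is claimed both to vote for $b$ and not to vote for $b$ — contradiction. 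The main obstacle, and the thing I would be most careful about, is the equivocation bookkeeping: making sure that "directly skip the slot $s$" is correctly unfolded into "skip every equivocating block in $s$, in particular $b$," so that the two quorums of vote-round blocks are genuinely about the same leader block $b$ and the intersection argument bites. This likely leans on the precise semantics of the loop in $\Call{TryDirectDecide}{\cdot}$ together with the rule that honest validators eventually see all blocks (and their causal histories), so I would state that dependence explicitly.
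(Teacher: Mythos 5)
Your final argument coincides with the paper's own proof: by \Cref{obs:exists-certificate} the commit of $b$ yields a certificate and hence $2f+1$ vote-round blocks voting for $b$, a direct skip yields $2f+1$ vote-round blocks not voting for $b$, and quorum intersection over $n=3f+1$ together with the fact that an honest validator produces a single block in the vote round gives the contradiction. Your long detour about equivocating siblings in the same slot is a real subtlety of the pseudocode, but the paper resolves it exactly as you ultimately do---by reading ``directly skip slot $s$'' as having $2f+1$ non-votes for $b$ itself---so the proposal is essentially the same proof.
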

\begin{proof}
    Assume by contradiction that some honest validator $v'$ decides to directly skip $s$. Then it must be the case that in the local DAG of $v'$, at least $2f+1$ validators did not vote for $b$. However, since $v$ commits $b$ at $s$, by \Cref{obs:exists-certificate}, there must exist a certificate for $b$ at $s$. So in $v$'s local DAG there must be $2f+1$ validators that vote for $b$. By quorum intersection, at least one honest validator both voted for $b$ and did not vote for $b$. Since honest validators produce a single block in the vote round, this is a contradiction.
\end{proof}

\begin{lemma} \label{lma5_honest_skip}
    If an honest validator directly commits some block in a slot $s$, then no other honest validator decides to skip the slot $s$.
\end{lemma}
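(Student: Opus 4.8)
The plan is to split the argument according to the two ways an honest validator can skip a leader slot, namely directly or indirectly. The direct case is already settled: \Cref{lma4_honest_unique} shows that no honest validator directly skips a slot in which another honest validator has committed a block. So the real work is to rule out an \emph{indirect} skip, and for that I would argue by contradiction.

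Suppose honest $v$ directly commits a block $b$ in slot $s$ while honest $v'$ indirectly skips $s$. Since $v$ commits $b$ \emph{directly}, by \Cref{obs:exists-certificate} (equivalently, by the \Call{SupportedLeader}{} branch of the direct rule) $v$'s local DAG contains $2f+1$ certificates for $b$, all lying in the certify round $r_c$ of $s$'s wave and authored by $2f+1$ distinct validators. Because $v'$ indirectly skips $s$, the procedure \Call{TryIndirectDecide}{} must have located an anchor slot $s_{\text{anchor}}$ that $v'$ classifies as \scommit, say with committed block $b_{\text{anchor}}$, such that no block in $s$ has a certified link to $b_{\text{anchor}}$ in $v'$'s view (otherwise \Call{IsCertifiedLink}{} would fire for some leader block and $v'$ would commit, not skip). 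Note that for $v'$ to classify $s_{\text{anchor}}$ as \scommit it must hold a block in that slot, and by construction $s_{\text{anchor}}.\mathit{round} > r_c$; moreover by \Cref{obs:agree-order} $v$ and $v'$ agree on the wave structure and on which slot $s$ is.

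The contradiction then comes from \Cref{lma1_cert_path}. Those $2f+1$ certificates for $b$ from distinct validators at round $r_c$ exist (they are in $v$'s DAG, hence they exist as blocks), so \Cref{lma1_cert_path} applies and every block at a round $r' > r_c$ has a path to a certificate for $b$ from round $r_c$. In particular $b_{\text{anchor}}$, being at a round $> r_c$, has such a certificate $c$ in its causal history; since honest validators only admit a block after downloading its entire causal history, $v'$ holds $c$ (and, via the $2f+1$ votes inside $c$, also holds $b$). By \Cref{lem:one-cert-per-validator}, $b$ is the \emph{only} block in slot $s$ that can be certified, so $c$ is a certificate specifically for $b$, and $b$ appears in \Call{LeaderBlock}{$w$} in $v'$'s view. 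Hence $c$ is a certify-round decision block for the wave that certifies $b$ and links to $b_{\text{anchor}}$, so \Call{IsCertifiedLink}{$b_{\text{anchor}}, b$} holds for $v'$, and \Call{TryIndirectDecide}{} returns $\scommit(b)$ for $s$ --- contradicting that $v'$ skips $s$.

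I expect the main obstacle to be making the ``certificate in the anchor's causal history'' step airtight: one must argue that the certificate guaranteed by \Cref{lma1_cert_path} is one $v'$ genuinely sees and can recognize as a certificate \emph{for $b$} rather than for some equivocating sibling in slot $s$, which is exactly where \Cref{lem:one-cert-per-validator} together with the full-causal-history discipline of honest validators is needed. A secondary point requiring care is the bookkeeping of \Call{TryIndirectDecide}{}, \Call{IsCertifiedLink}{}, and \Call{GetDecisionBlocks}{} --- specifically that the anchor is chosen among slots of round strictly greater than the certify round and that \Call{GetDecisionBlocks}{} yields precisely the certify-round blocks --- so that the witness $c$ lands in the set the algorithm actually inspects; and one should also double-check that ``$2f+1$ certificates'' in the direct-commit condition indeed entails $2f+1$ distinct authoring validators, as required to invoke \Cref{lma1_cert_path}.
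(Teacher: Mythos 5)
Your proof is correct and follows essentially the same route as the paper's: rule out a direct skip via \Cref{lma4_honest_unique}, then use the $2f+1$ certificates implied by the direct commit together with \Cref{lma1_cert_path} to show the anchor must have a certificate for $b$ in its causal history, so the indirect rule cannot skip $s$. The extra detail you add (via \Cref{lem:one-cert-per-validator} and the full-causal-history discipline) only elaborates steps the paper leaves implicit.
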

\begin{proof}
    Assume by contradiction that an honest validator $v$ directly commits block $b$ in slot $s$ while another honest validator $v'$ decides to skip $s$. By \Cref{lma4_honest_unique}, $v'$ cannot directly skip $s$; it must be the case therefore that $v'$ skips $s$ using the indirect decision rule. Let $r$ be the round of $s$. Since $v$ directly commits $b$, there exist $2f+1$ certificates for $b$ at $s$. Therefore, by \Cref{lma1_cert_path}, all blocks at rounds $r' > r+w-1$, including the anchor of $s$, have a path to a certificate for $b$ at $s$. Thus, $v'$ cannot decide to skip $s$ using the indirect decision rule. We have reached a contradiction.
\end{proof}

\begin{lemma}\label{lem:agree-commit}
    If a slot $s$ is committed at two honest validators, then $s$ contains the same block at both validators.
\end{lemma}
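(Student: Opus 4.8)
The plan is to argue by contradiction using the two pieces of machinery already established: every commit (direct or indirect) is backed by a certificate in the committing validator's local DAG (\Cref{obs:exists-certificate}), and no two equivocating blocks from the same validator in the same round can both be certified (\Cref{lem:one-cert-per-validator}). Concretely, suppose honest validator $v$ commits block $b$ in slot $s$ and honest validator $v'$ commits block $b'$ in the same slot $s$, and assume for contradiction that $b \neq b'$. Since a slot is a (validator, round) pair and both $b$ and $b'$ sit in slot $s$, they are two distinct blocks authored by the same (necessarily Byzantine) validator for the same round $r$ --- i.e., an equivocation.

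Next I would invoke \Cref{obs:exists-certificate} twice: $v$'s local DAG contains a certificate $c_b$ for $b$, and $v'$'s local DAG contains a certificate $c_{b'}$ for $b'$. The key observation here is that the notion of ``a block is certified'' is view-independent: a certificate is simply a \certify-round block whose parent set contains $2f+1$ votes for the target, and this property is determined by the causal history of that block, not by which honest validator happens to hold it. Hence the existence of $c_b$ and $c_{b'}$ (in possibly different local DAGs) means that both $b$ and $b'$ are certified in the global sense used in the statement of \Cref{lem:one-cert-per-validator}. But that lemma says at most one block per round from the same validator can be certified, contradicting $b \neq b'$. Therefore $b = b'$, which is what we wanted. (I would also note in passing that a single honest validator assigns at most one status to slot $s$ via \Call{TryDirectDecide}{} / \Call{TryIndirectDecide}{}, so ``$v$ commits $b$ in $s$'' unambiguously picks out one block per validator; the content of the lemma is precisely that these per-validator choices agree.)

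I expect the only real subtlety --- rather than a genuine obstacle --- to be making explicit that \Cref{lem:one-cert-per-validator} is a statement about certificates existing \emph{anywhere}, not just within one fixed local DAG, so that it legitimately applies to the two certificates $c_b$ and $c_{b'}$ coming from $v$'s and $v'$'s distinct views. Once that point is granted, the proof is a direct two-line application of \Cref{obs:exists-certificate} and \Cref{lem:one-cert-per-validator} and needs no further case analysis on whether the commits were direct or indirect.
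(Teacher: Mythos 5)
Your proposal is correct and takes essentially the same route as the paper's own proof: commit implies a certificate (\Cref{obs:exists-certificate}), and \Cref{lem:one-cert-per-validator} rules out two distinct certified blocks from the same validator in the same round, with no case split on direct versus indirect commits. The only cosmetic difference is that the paper explicitly invokes \Cref{obs:agree-order} to justify that both validators associate slot $s$ with the same (validator, round) pair, a point you instead take directly from the definition of a slot, and your remark that certification is view-independent is exactly why \Cref{lem:one-cert-per-validator} applies across the two local DAGs.
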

\begin{proof}
    Let $v$ and $u$ be two honest validators and assume that $v$ commits block $b$ at slot $s$. We will show that if $u$ commits slot $s$, then $s$ contains $b$ at $s$. Let $w$ be the validator that produced block $b$. By \Cref{obs:exists-certificate}, for $b$ to be committed at slot $s$ at $v$, there must exist at least one certificate for $b$. By \Cref{obs:agree-order}, $v$ and $u$ agree that $s$ must contain a block by $w$. By \Cref{lem:one-cert-per-validator}, at most a single block per round from $w$ can be certified. So $u$ cannot have a certificate for any other block than $b$ at slot $s$.
\end{proof}

We say that a slot is \textit{decided} at a validator $v$ if $s$ is committed or skipped, \ie if it is categorized as $\scommit$ or $\sskip$. Otherwise, $s$ is \textit{undecided}.
\begin{lemma}\label{lem:consistent}
    If a slot $s$ is decided at two honest validators $v$ and $v'$, then either both validators commit $s$, or both validators skip $s$.
\end{lemma}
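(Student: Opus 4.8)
The plan is to prove the statement by strong induction on the round number $r$ of the slot $s$, processed from higher rounds down to lower ones (matching the order in which $\Call{TryDecide}{\cdot}$ visits slots). The base case is immediate: a slot at (or near) the top round of a validator's finite local DAG cannot be decided, since a direct decision needs blocks from the following $w-1$ rounds and an indirect decision needs a non-skipped slot at a strictly later round. Commit/skip decisions are stable once taken, so we may fix any convenient DAG states for $v$ and $v'$ at which $s$ is decided. Without loss of generality assume $v$ commits $s$; if $v'$ also commits $s$ then the two committed blocks coincide by \Cref{lem:agree-commit}, so it remains only to rule out that $v'$ skips $s$ (the case where both skip is trivial, and the case where $v$ skips and $v'$ commits is symmetric).

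Two of the cases reduce to results already in hand. If $v$ \emph{directly} commits $s$, then by \Cref{lma5_honest_skip} no honest validator skips $s$ -- contradiction. If $v$ \emph{indirectly} commits the block $b$ at $s$, then in particular $v$ commits $b$ at $s$, so by \Cref{lma4_honest_unique} $v'$ cannot \emph{directly} skip $s$; hence the only remaining possibility is that $v$ indirectly commits $s$ and $v'$ indirectly skips $s$.

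For that case, let $s_a$ be the anchor used by $v$ and $s_a'$ the anchor used by $v'$; both lie at rounds strictly greater than $r+w-1$, hence strictly greater than $r$. By the anchor-selection rule, every slot strictly closer to $s$ than $s_a$ is skipped at $v$ while $s_a$ is committed at $v$ (an undecided slot closer than the anchor would itself have been chosen as anchor and would have forced $s$ to remain undecided), and likewise all slots strictly closer than $s_a'$ are (decided and) skipped at $v'$ with $s_a'$ committed at $v'$. If $s_a \neq s_a'$, the one that is closer to $s$ -- say $s_a$ -- is committed at $v$ but skipped at $v'$, contradicting the induction hypothesis since $s_a$ sits at a round $> r$; symmetrically if $s_a'$ is the closer one. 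Hence $s_a = s_a' =: s^{*}$, and since $s^{*}$ is committed at both validators it contains the same block $b^{*}$ by \Cref{lem:agree-commit}.

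To finish, exploit the \textsc{IsCertifiedLink} test. Since $v$ indirectly commits $b$ at $s$ via anchor $s^{*}$, $v$'s DAG holds a certificate $c$ for $b$ with $c$ in the causal history of $b^{*}$. Both validators, having committed $s^{*}$, possess $b^{*}$ together with its entire causal history (by the DAG download rule), so $v'$ also holds $c$ and the vote-round blocks it references; as \textsc{IsVote}, \textsc{IsCert} and \textsc{IsLink} are deterministic functions of hash references, $v'$ likewise sees $c$ as a certificate for $b$ lying in $b^{*}$'s causal history, so $\textsc{IsCertifiedLink}(b^{*},b)$ holds at $v'$ and its indirect rule classifies $s$ as \scommit, not \sskip -- a contradiction, completing the induction. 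I expect the indirect/indirect case to be the main obstacle: it forces us to (i) set up the induction so the anchor, always at a strictly higher round, is covered by the hypothesis, (ii) argue the two anchors must be the very same slot, and (iii) check that the \textsc{IsCertifiedLink} predicate is evaluated identically by both validators, which rests on causal histories being hash-determined and fully downloaded before a slot is committed.
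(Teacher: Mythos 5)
Your proof is correct and takes essentially the same route as the paper's: rule out the direct cases via \Cref{lma4_honest_unique} and \Cref{lma5_honest_skip}, force both validators onto the indirect rule, argue via a highest-differing-slot/downward-induction argument that they use the same anchor, apply \Cref{lem:agree-commit} to get the same anchor block, and conclude the indirect rule must decide identically. Your explicit justification of the last step (determinism of \textsc{IsCertifiedLink} over the anchor's fully downloaded, hash-determined causal history) merely spells out what the paper leaves implicit.
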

\begin{proof}
    Assume by contradiction that there exists a slot $s$ such that $v$ and $v'$ decide differently at $s$. We consider a finite execution prefix and assume \textit{wlog} that $s$ is the highest slot at which $v$ and $v'$ decide differently (*). Further assume \textit{wlog} that $v$ commits $s$ and $v'$ skips $s$. By \Cref{lma4_honest_unique} and \Cref{lma5_honest_skip}, neither $v$ nor $v'$ could have used the direct decision rule for $s$; they must both have used the indirect rule. Consider now the anchor of $s$: $v$ and $v'$ must agree on which slot is the anchor of $s$, since by our assumption (*) above, they make the same decisions for all slots higher than $s$, including the anchor of $s$. Let $s'$ be the anchor of $s$; $s'$ must be committed at both $v$ and $v'$. Thus, by \Cref{lem:agree-commit}, $v$ and $v'$ commit the same block $b'$ at $s'$. But then $v$ and $v'$ cannot reach different decisions about slot $s$ using the indirect decision rule. We have reached a contradiction.
\end{proof}

% To show liveness, we show that \sysname will have $2f+1$ blocks at the end of each wave that
% has enough certificates to be committed by honest validators. This is done through a series of lemmas that first shows
% $2f+1$ validators will include at least 1 block $b$ in round $r+1$ in common. Block $b$ is shown to be an honest validator,
% and therefore the blocks that $b$ references (of which there are $2f+1$ many of them) will be certified.

We have proven the consistency of honest validators' commit sequences: honest validators commit (or skip) the same leader blocks, in the same order. However, we are not done: we also need to prove that non-leader blocks are delivered in the same order by honest validators. We show this next.

\para{Causal history \& delivery conditions}
Consider an honest validator $v$. We call the \textit{causal history} of a block $b$ in $v$'s DAG, the transitive closure of all blocks referenced by $b$ in $v$'s DAG, including $b$ itself. In \sysname, a block $b$ is delivered by an honest validator $v$ if (1) there exists a committed leader block $l$ in $v$'s DAG such that $b$ is in $l$'s causal history (2) all slots up to $l$ are decided in $v$'s DAG and (3) $b$ has not been delivered as part of a lower slot's causal history. In this case we say $b$ is \textit{delivered at} slot $s$, or \textit{delivered with} block $l$.

% \begin{observation}
%     When an honest validator $v$ adds a block $b$ to its local DAG, $v$'s local DAG already contains all of $b$'s causal history.
% \end{observation}   
% \begin{proof}
%     This is by construction: honest validators only add valid blocks to their DAGs, and valid blocks must reference only valid blocks (which are already in the local DAG). Recursively, it must be the case that all of a block's causal history is in the DAG before the block is added.
% \end{proof}

% We denote by $C_v$ the committed sequence of leaders in $v$'s local DAG. 
% Finally, we call the \textit{causal committed history} of $v$ as the union of all causal histories of blocks in $C_v$: $\cup_{b \in C_v} [b]$. 

\begin{lemma}\label{lem:delivered-same-slot}
    If a block $b$ is delivered by two honest validators $v$ and $v'$, then $b$ is delivered at the same slot $s$, and $b$ is delivered with the same leader block $l$, at both $v$ and $v'$.
\end{lemma}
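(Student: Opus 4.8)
The plan is to leverage the consistency results already established — in particular Lemma~\ref{lem:consistent} (honest validators agree on which slots are committed/skipped) and Lemma~\ref{lem:agree-commit} (a committed slot holds the same block at all honest validators) — and reduce the claim to a statement about the first committed slot whose causal history contains $b$. Concretely, for an honest validator $v$, define $s_v(b)$ to be the \emph{lowest} slot $s$ such that (i) $s$ is committed at $v$, (ii) every slot below $s$ is decided at $v$, and (iii) $b$ lies in the causal history of the leader block committed at $s$ in $v$'s DAG. The delivery condition is exactly that $b$ is delivered at $s_v(b)$ (condition (3) — "not delivered as part of a lower slot" — is automatically captured by taking the \emph{lowest} such slot, using that delivery is order-preserving along the commit sequence). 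So it suffices to show $s_v(b) = s_{v'}(b)$, and then invoke Lemma~\ref{lem:agree-commit} to conclude that the leader block at that common slot is the same at both validators.

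First I would argue that $b$ being in the causal history of a committed leader block is a validator-independent fact. If a leader block $l$ is committed at $v$, then by Observation~\ref{obs:exists-certificate} there is a certificate for $l$ in $v$'s DAG, so (by the block-validity rules of Section~\ref{sec:dag-structure}, which require a block's full causal history to be downloaded before it is referenced) $v$ has all of $l$'s causal history; moreover causal history is determined purely by the hash references inside $l$, which are fixed once $l$ is created. Hence "$b \in$ causal history of $l$" does not depend on which honest validator we ask, as long as both have committed $l$ — and by Lemma~\ref{lem:agree-commit} the block committed in a given slot is the same at both, so this is well-defined. Combined with Lemma~\ref{lem:consistent}, which says $v$ and $v'$ agree on the commit/skip status of every slot they have both decided, the set of slots satisfying (i)–(iii) is the same at $v$ and $v'$ \emph{restricted to slots both have decided}. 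The remaining subtlety is that one validator may have progressed further than the other.

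The main obstacle, then, is handling the case where the minimizing slot $s_v(b)$ has been decided at $v$ but not yet at $v'$ (or vice versa): a priori $v'$ might later decide an even lower slot as committed with $b$ in its causal history, giving $s_{v'}(b) < s_v(b)$. The plan to rule this out is to observe that $v'$'s "first undecided slot" — the prefix of decided slots — only grows over time, and that whenever a slot becomes decided, Lemma~\ref{lem:consistent} forces it to agree with $v$'s decision; since $v$ already has a decided prefix extending past $s_v(b)$, once $v'$ decides all slots up to $s_v(b)$ it must decide them identically to $v$, so it cannot find a lower committed slot containing $b$ that $v$ did not. Thus $s_{v'}(b)$, once $b$ is delivered at $v'$ at all, must equal $s_v(b)$. (I would also note the degenerate possibility that $b$ appears in no committed slot's causal history as far as one validator has decided — but then that validator has not delivered $b$, so the hypothesis of the lemma does not apply.) Putting the pieces together: $s_v(b) = s_{v'}(b) =: s$, and by Lemma~\ref{lem:agree-commit} the leader block $l$ committed at $s$ is the same at both, which is exactly the claim.
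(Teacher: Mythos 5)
Your proposal is correct and follows essentially the same route as the paper: the paper argues by contradiction that if the delivery slots $s \ne s'$ differed, then by \Cref{lem:agree-commit} (together with the agreement on decisions from \Cref{lem:consistent}) the validator with the higher slot would already have delivered $b$ at the lower one, violating the no-redelivery condition---which is exactly your ``minimal committed slot containing $b$ is validator-independent'' argument in a different packaging. Your extra observations (validator-independence of causal history, explicit appeal to \Cref{lem:consistent}, the progressed-further case) are just spelled-out versions of steps the paper leaves implicit.
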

\begin{proof}
    Let $s$ be the slot at which $b$ is delivered at validator $v$, and $l$ the corresponding leader block in $s$, also at validator $v$. Consider now the slot $s'$ at which $b$ is delivered at validator $v'$, and $l'$ the corresponding leader block. Assume by contradiction that $s' \ne s$. If $s' < s$, then $v$ would have also delivered $b$ at slot $s'$, since by \Cref{lem:agree-commit} must commit the same leader blocks in the same slots, so $v$ could not have delivered $b$ again at slot $s$; a contradiction. Similarly, if $s < s'$, then $v'$ would have already delivered $b$ at slot $s$, since by \Cref{lem:agree-commit} $v$ and $v'$ must have committed the same block in slot $s$; contradiction. Thus it must be that $s = s'$, and by \Cref{lem:agree-commit}, $l = l'$.
\end{proof}

We can now prove the main safety properties of \sysname: Total Order and Integrity.
\begin{theorem}[Total Order]
    \sysname satisfies the total order property of Byzantine Atomic Broadcast.
\end{theorem}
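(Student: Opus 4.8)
The plan is to reduce Total Order to the consistency results already established. Total Order states that if an honest validator $v_i$ delivers $(m, q, v_k)$ before $(m', q', v_k')$, then every honest validator $v_j$ delivers them in the same relative order. First I would recall the delivery procedure: an honest validator processes committed leader slots in the fixed, coin-determined order (Observation~\ref{obs:agree-order}), and for each committed leader block $l$ it linearizes the not-yet-delivered blocks of $l$'s causal history via $\Call{LinearizeSubDags}{\cdot}$ using a fixed deterministic tie-breaking order. So the global delivery order at $v$ is completely determined by (a) the sequence of committed leader slots, (b) which block sits in each committed slot, and (c) the causal history of each such block. The goal is to argue all three are identical across honest validators at the point where a block is delivered.

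The key steps, in order: (1) By Lemma~\ref{lem:consistent} and Lemma~\ref{lem:agree-commit}, honest validators commit the same blocks in the same slots, and by Observation~\ref{obs:agree-order} they iterate over slots in the same order; hence the subsequence of \emph{leader} blocks delivered by any two honest validators is a prefix-comparable, order-consistent sequence. (2) By Lemma~\ref{lem:delivered-same-slot}, any non-leader block $b$ that is delivered by both $v$ and $v'$ is delivered at the same slot $s$ and with the same leader block $l$. (3) I then need that the causal history of $l$ is the same at $v$ and at $v'$ at the time $l$'s slot is decided — this follows because honest validators only insert a block's hash into their DAG after downloading its entire causal history (stated in \Cref{sec:dag-structure}), so a committed block $l$ has a unique, fully-determined causal sub-DAG, identical for everyone who holds $l$. (4) Therefore $\Call{LinearizeSubDags}{\cdot}$, applied to the same leader sequence with the same sub-DAGs and the same deterministic ordering, produces the same global sequence up to each delivered block; consequently the relative order of any two blocks both delivered by $v$ and by $v'$ is the same. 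Finally, combine: if $v_i$ delivers $(m,q,v_k)$ before $(m',q',v_k')$, both appear in $v_j$'s sequence (the slot at which each is delivered is committed at $v_j$ too, by consistency, once $v_j$'s DAG is far enough along — liveness is not needed here, only that \emph{if} $v_j$ delivers them it does so consistently), and in the same relative order.

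The main obstacle I expect is step (3)/(4): making precise that the linearization order is genuinely a function of the committed leader sequence alone, independent of how much more of the DAG a validator currently sees. One has to be careful that a block $b$ in $l$'s causal history is delivered ``with $l$'' and not accidentally with some \emph{later} committed leader at one validator and with $l$ at another — but this is exactly ruled out by Lemma~\ref{lem:delivered-same-slot}, so the argument goes through by invoking it rather than re-proving it. A minor subtlety is the ``$b$ not already output'' clause in $\Call{LinearizeSubDags}{\cdot}$: since the set of previously-committed leaders and their causal histories agree across honest validators (by the lemmas above), the set of already-output blocks agrees as well, so the within-slot linearizations coincide. With these pieces the theorem follows directly.
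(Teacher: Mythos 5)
Your proposal is correct and follows essentially the same route as the paper: the paper's proof likewise reduces Total Order to \Cref{lem:delivered-same-slot} plus the fact that honest validators linearize the causal histories of committed leader blocks with the same deterministic function and deliver in that order. Your additional steps (agreement on slots and leader blocks, uniqueness of a block's causal history via hash references, and agreement on the already-output set) simply spell out the details the paper leaves implicit in that one-sentence argument.
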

\begin{proof}
    This property follows immediately from \Cref{lem:delivered-same-slot} and the fact that honest validators order the causal histories of committed blocks using the same deterministic function, and deliver blocks in this order.
\end{proof}

\begin{theorem}[Integrity]\label{thm:integrity}
    \sysname satisfies the integrity property of Byzantine Atomic Broadcast.
\end{theorem}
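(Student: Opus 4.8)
The plan is to argue that a single honest validator $v_i$ emits at most one $\deliver{i}{m,q,v_k}$ event for each pair $(q,v_k)$ by pinning down where such events can originate. In \sysname the only place blocks are handed to the application is \Call{LinearizeSubDags}{\cdot} (Line~\ref{alg:helper:linearize} of \Cref{alg:helper}), which is invoked from \Call{ExtendCommitSequence}{\cdot} each time the validator extends its commit sequence. This procedure is idempotent and, before appending a block $b$ to the output, checks $b \notin O$ and that $b$ was not already output by a previous invocation --- exactly condition~(3) of the delivery definition in the ``Causal history \& delivery conditions'' paragraph. So I would first establish the invariant: each distinct block $b$ of the DAG is passed to the application by $v_i$ at most once over the entire execution. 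Given Total Order (already proven), Integrity is then essentially a matter of relating this per-block deduplication to the BAB indices.

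Second, I would tie the invariant to the sequence numbers. For an honest author $v_k$ there is at most one block per round, so mapping the round of that block to the index $q$ is well defined and agrees with the $q$ used in the corresponding $\bcast{k}{m,q}$, as Validity requires. For a Byzantine author $v_k$ that equivocates within a round, the blocks it produces are still distinct DAG blocks, hence each is delivered at most once by the invariant; the indices are assigned so that distinct delivered blocks of $v_k$ get distinct values (e.g.\ by continuing a per-author counter in the deterministic linearization order). In both cases, for a fixed $(q,v_k)$ there is a unique block that could carry that label, and by the invariant it is emitted at most once, which is precisely the Integrity property.

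I expect the main obstacle to be the bookkeeping for equivocating Byzantine authors: one must fix precisely how indices are assigned to a validator's blocks so that the assignment is deterministic and stable as $v_i$'s DAG grows, is consistent with the round-based numbering used for honest authors, and never reuses an index across two equivocating blocks. A clean way to secure stability is to observe that a validator never revises an earlier decision --- once a slot is classified \scommit or \sskip it stays so, and once committed it holds the same block. For the direct rule this follows from the append-only nature of an honest DAG (the witnessing $2f+1$ votes or certificates persist); for the indirect rule it follows from \Cref{lem:consistent} applied to $v_i$'s successive local views, together with the fact that a committed anchor, once selected, is stable (a skipped slot never becomes non-skipped and a committed slot never becomes skipped). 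Consequently the committed-leader prefix, and hence the linearized output $O$, grows monotonically as a chain of prefixes, so the per-author delivery counter is well defined and never rewound, which closes the argument.
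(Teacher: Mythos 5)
Your core argument is exactly the paper's: Integrity holds by construction, because condition~(3) of the delivery definition (enforced by the deduplication check in \Call{LinearizeSubDags}{\cdot}) ensures an honest validator never delivers the same block twice. The additional machinery you develop---the per-author index assignment for equivocating Byzantine blocks and the stability of slot decisions across successive local views---is careful but goes beyond what the paper's one-line by-construction proof uses, and does not change the essence of the argument.
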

\begin{proof}
    This is by construction: a block $b$ is delivered as part of the causal history of a committed leader block only if $b$ has not been delivered along with an earlier leader block (see "Causal history \& delivery conditions" above). So an honest validator cannot deliver the same block twice.
\end{proof}

We now turn to liveness properties. The following two lemmas establish that blocks broadcast by honest validators are eventually included in all honest validators' DAGs.

\begin{lemma}\label{lem:eventually-all-references}
    If a block $b$ produced by an honest validator $v$ references some block $b'$, then $b'$ will eventually be included in the local DAG of every honest validator.
\end{lemma}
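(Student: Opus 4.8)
The plan is to lean on two facts established in \Cref{sec:dag-structure}: (i) an honest validator adds a hash reference to a block only after it has downloaded and validated that block's entire causal history; and (ii) messages exchanged between honest validators are eventually delivered, together with the fact that an honest validator broadcasts each block it produces to all validators.

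First I would observe that, since $v$ is honest and the block $b$ it produced references $b'$, at the moment $v$ created $b$ it must—by (i)—already have held in its local DAG the block $b'$ together with the whole causal history of $b$. Moreover every block in this causal history is valid, since an honest validator proposes only valid blocks and the causal history of a valid block consists of valid blocks. So the honest validator $v$ holds the entire (finite) set of blocks in $b$'s causal history, $b'$ included.

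Next I would use that $v$ broadcasts $b$ to every validator, so by (ii) every honest validator $u$ eventually receives $b$ from $v$. Before $u$ can insert $b$ (or a reference to it) into its own DAG, rule (i) forces $u$ to first obtain the full causal history of $b$. Since $v$ holds all of these blocks and serves fetch requests, and since $u$'s requests and $v$'s replies are eventually delivered between these two honest parties, $u$ eventually acquires every block of $b$'s causal history—in particular $b'$—and adds it to its local DAG.

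The main obstacle is making this last step precise. I would formalize ``$u$ eventually obtains the full causal history of $b$'' by induction on the rounds spanned by that causal history (equivalently, on its size, which is finite because a block references only strictly earlier rounds, so there are no cyclic dependencies). The base and inductive steps both reduce to: any still-missing block known to lie in the causal history can be fetched from $v$, which holds it, via finitely many message exchanges between honest parties, each completing by eventual delivery; and no such fetch can stall on an invalid or permanently unavailable dependency, precisely because the entire causal history is valid and fully present at the honest validator $v$.
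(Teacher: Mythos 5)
Your proposal is correct and follows essentially the same route as the paper: the paper likewise argues that since the honest author $v$ holds the full causal history of $b$, any honest validator missing a referenced block requests it from $v$ via the synchronizer sub-component, and eventual delivery between honest parties plus recursion over the (finite) causal history yields inclusion of $b'$ in every honest validator's DAG. Your explicit induction over the acyclic, finite causal history is just a slightly more formal rendering of the paper's ``the same is recursively true'' step.
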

\begin{proof}
    This is ensured by the synchronizer sub-component in each validator: if some validator $w$ receives $b$ from $v$, but does not have $b'$ yet, $w$ will request $b'$ from $v$; since $v$ is honest and the network links are reliable, $v$ will eventually receive $w$'s request, send $b'$ to $w$, and $w$ will eventually receive $b'$. The same is recursively true for any blocks from the causal history of $b'$, so $w$ will eventually receive all blocks from the causal history of $b'$ and thus include $b'$ in its local DAG.
    % \pasindu{for completeness, add that b' should be in at least f+1 honest validator's DAG, because we progress in lock steps}
\end{proof}

\begin{lemma}\label{lem:eventually-include}
    If an honest validator $v$ broadcasts a block $b$, then every correct validator will eventually include $b$ in its local DAG.
\end{lemma}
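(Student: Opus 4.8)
The plan is to combine three ingredients: the best-effort broadcast performed by the honest proposer, the reliability of links between honest validators, and \Cref{lem:eventually-all-references}, which already handles the recursive download of causal histories. Recall from \Cref{sec:dag-structure} that an honest validator includes a block in its local DAG only once it has downloaded and validated the block's entire causal history; so the only obstacle to including $b$ is obtaining (and validating) every block reachable from $b$ via its parent references.

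First I would observe that since $v$ is honest, when $v$ broadcasts $b$ it sends $b$ to every validator (best-effort broadcast of DAG vertices, as in \Cref{sec:dag-structure}). Because the network delivers all messages between honest validators eventually, every honest validator $w$ eventually receives $b$ from $v$. At this point $w$ may not yet be able to insert $b$ into its DAG, because $b$ references a set of parent blocks $b_1, \dots, b_k$ from earlier rounds which $w$ may be missing.

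Next, since $b$ is produced by an honest validator and references each $b_j$, \Cref{lem:eventually-all-references} applies directly: each $b_j$ (together with its full causal history) is eventually included in the local DAG of every honest validator, in particular of $w$. Once $w$ has received $b$ and has included every parent of $b$ --- and hence, transitively, $b$'s entire causal history --- $w$ validates $b$ (its signature, round, coin share, and $2f+1$ references to the previous round are all well-formed because $v$ is honest) and inserts $b$ into its local DAG. This holds for every honest validator $w$, which establishes the lemma.

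I do not expect a serious obstacle here: the content of the argument is essentially bookkeeping, and the one delicate point --- that a block can only be inserted after its causal history is present --- is exactly what \Cref{lem:eventually-all-references} was set up to discharge. The only care needed is to note that \Cref{lem:eventually-all-references} is stated for blocks \emph{produced by an honest validator}, which is precisely our situation since $v$ is honest, so it may be invoked for each reference of $b$.
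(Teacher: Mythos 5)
Your argument matches the paper's own proof: reliable links ensure every honest validator eventually receives $b$, and \Cref{lem:eventually-all-references} (applied to the blocks $b$ references, which is legitimate since $v$ is honest) supplies the full causal history, after which $b$ can be validated and included. The extra bookkeeping you spell out about parents and validation is just a more explicit rendering of the same two-step argument, so the proposal is correct and essentially identical in approach.
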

\begin{proof}
    Since network links are reliable, all honest validators will eventually receive $b$ from $v$. By \Cref{lem:eventually-all-references}, all honest validators will eventually receive all of $b$'s causal history, and so will include $b$ in their local DAG.
\end{proof}

The following crucial lemma establishes that in any round $r$, there is at least one block $b$, called a common core, such that all blocks at round $r+2$ have a path to $b$.

\begin{lemma}\label{lem:common-core-1}
    For any $r$, there is at least one block $b$ from round $r$ such that any valid block from round $r+2$ has a path to $b$.
\end{lemma}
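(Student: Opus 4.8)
The plan is to prove this by applying quorum intersection twice, with a pigeonhole step in between that promotes a pairwise common ancestor into one that is common to \emph{all} round-$(r+2)$ blocks. I rely on two structural facts from \Cref{sec:dag-structure}: every valid round-$(r+1)$ block references at least $2f+1$ round-$r$ blocks from distinct validators (hence at least $f+1$ of them are authored by honest validators), every valid round-$(r+2)$ block references at least $2f+1$ round-$(r+1)$ blocks from distinct validators, and each honest validator produces exactly one block per round, so ``$B$ references honest validator $v$ at round $\rho$'' unambiguously identifies $v$'s unique round-$\rho$ block.

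First I would restrict attention to the honest round-$(r+1)$ blocks: there are $2f+1$ of them, one per honest validator. Each references at least $f+1$ honest round-$r$ blocks, while there are at most $2f+1$ honest round-$r$ blocks overall. Double-counting the incidence pairs (honest round-$(r+1)$ block, honest round-$r$ block it references) gives at least $(2f+1)(f+1)$ pairs distributed over at most $2f+1$ honest round-$r$ blocks, so some honest round-$r$ block $b$ is referenced by at least $f+1$ honest round-$(r+1)$ blocks. Let $V_b$ be the set of honest validators whose (unique) round-$(r+1)$ block references $b$, so $|V_b| \ge f+1$.

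Next I would take an arbitrary valid round-$(r+2)$ block $B$ and intersect quorums a second time. $B$ references the round-$(r+1)$ blocks of a set $A$ of at least $2f+1$ distinct validators, and since $|A| + |V_b| \ge (2f+1) + (f+1) > 3f+1 = n$, there is a validator $v \in A \cap V_b$. As $v$ is honest it has a single round-$(r+1)$ block, which is both the block $B$ references for $v$ and a block that references $b$; hence $B$ reaches $b$ along the path through $v$'s round-$(r+1)$ block. Since $B$ was arbitrary, $b$ is the desired common core.

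The step I expect to be the crux is the interaction with equivocation, which dictates restricting \emph{both} pigeonhole universes to honestly-authored blocks: a Byzantine validator may equivocate at round $r$ (so distinct round-$(r+1)$ blocks ``covering its slot'' could reference distinct blocks, defeating a single-common-ancestor claim) or at round $r+1$ (so $B$ might reference a different equivocating block than the one used to define $V_b$). Counting only over honest blocks sidesteps both, but it relies on the $2f+1$-reference requirement being per distinct author and on considering the DAG once all honest validators have published their round-$(r+1)$ blocks---benign here, since a missing honest round-$(r+1)$ block means no round-$(r+2)$ block references it, and such a block is eventually published anyway. I would also double-check the tight arithmetic of the two bounds, $(2f+1)(f+1)/(2f+1) = f+1$ and $(2f+1)+(f+1)-(3f+1) = f+1 \ge 1$, since that is where an off-by-one would hide.
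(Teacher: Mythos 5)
Your proof is correct and follows essentially the same route as the paper's: a counting/pigeonhole argument over the $2f+1$ honest round-$(r+1)$ blocks (each referencing at least $f+1$ of the relevant round-$r$ blocks) to find a round-$r$ block $b$ referenced by at least $f+1$ honest round-$(r+1)$ blocks, followed by quorum intersection between that set and the $2f+1$ references of any round-$(r+2)$ block. Your explicit restriction to honest authors to neutralize equivocation, and your spelled-out final intersection with $V_b$, in fact state more carefully what the paper's proof leaves implicit (its last step loosely says the round-$(r+2)$ block intersects $B$ rather than the set of blocks referencing $c^*$), so no gap.
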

\begin{proof}
    Consider a set $B$ of $2f+1$ blocks in round $r+1$ from  honest validators.
    % Clearly, the validators in round $r+3$ have also participated in round $r+2$, $r+1$, and $r$. 
    % These validators' blocks reference at least $2f+1$ blocks from round $r+2$. 
    Using $B$, we create a table $T$, as follows: for blocks $b,c\in B$, let $T[b,c]=1$ if $b$ in $r+1$ references $c$ in $r$, $T[b,c]=0$ otherwise. By quorum intersection, any $b$ will reference at least $f+1$ blocks in round $r$ that are also in $B$, so each row of $T$ has at least $f+1$ entries equal to $1$. Thus, $T$ has at least $(2f+1)(f+1)$ entries equal to $1$. By a counting argument, there is a block $c^*$ in $B$ that has a $1$ entry in at least $f+1$ rows, i.e., a block from round $r$ which is referenced by $f+1$ blocks from round $r+1$. Let $P'$ be the set of blocks from round $r+1$ which reference $c^*$. Consider now any valid block $b$ in round $r+2$; $b$ references $2f+1$ blocks in $r+1$, so by quorum intersection $b$ references at least one block in $B$. Thus, $b$ has a path to $c^*$.
\end{proof}

\subsection{Specific Proofs for $w = 5$}\label{app:w5-proofs}

We continue with proofs that are specific to the liveness of the $w=5$ version of \sysname. We show that each wave has at least $2f+1$ leader blocks that can be directly committed (Lemmas~\ref{lem:common-core-2} and \ref{lem:common-core-3}), and thus that each wave has a nonzero probability of directly committing at least one block (\Cref{lem:commit-prob}). We then show that each slot is eventually decided directly or indirectly (\Cref{lem:all-slots-decide}). Finally, we show that \sysname satisfies the Validity and Agreement properties of BAB.

As a consequence of \Cref{lem:common-core-1}, we have the following:

\begin{lemma}\label{lem:common-core-2}
    For any $r$, there exists a set $S$ of at least $2f+1$ blocks from round $r$ such that any valid block from round $r+3$ is a vote for every block in $S$.
\end{lemma}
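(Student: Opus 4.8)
The plan is to obtain this as a direct corollary of \Cref{lem:common-core-1}, applied one round later. First I would invoke \Cref{lem:common-core-1} with round $r+1$ in place of $r$: this yields a block $b^\ast$ at round $r+1$ such that every valid block at round $(r+1)+2 = r+3$ has a path to $b^\ast$. Inspecting the proof of \Cref{lem:common-core-1}, the common-core block it produces is always authored by a validator in the honest quorum $B$, so $b^\ast$ is honest. Since $b^\ast$ is a valid block at round $r+1$, it references at least $2f+1$ distinct valid blocks from round $r$ (from $2f+1$ distinct validators); I take $S$ to be exactly this set of round-$r$ parents of $b^\ast$, so that $|S| \ge 2f+1$.

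Next I would argue that every valid block $c$ from round $r+3$ is a vote for every $s \in S$. Reachability is immediate: $c$ has a path to $b^\ast$ by the choice of $b^\ast$, and $b^\ast$ has a direct edge to $s$, so $c$ has a path to $s$. The remaining step is to upgrade ``$c$ has a path to $s$'' to ``$c$ votes for $s$'', i.e.\ that the depth-first traversal underlying \textsc{IsVote}, started at $c$ and searching for a block authored by $s$'s author in round $r$, returns $s$ itself. Because $b^\ast$ is honest it references at most one block per author in round $r$; in particular, for every honestly-authored $s \in S$ there is a \emph{unique} round-$r$ block by that author in the whole DAG, so the traversal cannot return anything else. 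The only case requiring care is when $s$'s author equivocated in round $r$: there one must show that no sibling of $s$ is encountered earlier in $c$'s traversal, which I would handle using the facts that honest validators include a reference only after validating the referenced block's entire causal history and that $b^\ast$, being honest, does not itself reference any sibling of $s$. With both halves in place, $S$ witnesses the claim.

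I expect the equivocation case of the path-to-vote upgrade to be the main obstacle, since reachability alone does not pin down which of an author's equivocating round-$r$ blocks a depth-first search reaches first. The cleanest fallback, should that case resist a short argument, is to observe that the statement is only ever used downstream with $S$ restricted to honestly-authored blocks (where ``path'' and ``vote'' coincide by \Cref{obs:no-equiv-vote}); one then has to re-establish that this honest part of $S$ still has size at least $2f+1$, which means replacing the black-box use of \Cref{lem:common-core-1} by a direct double-counting of references from round-$(r+1)$ blocks to round-$r$ blocks — essentially re-running the counting argument inside the proof of \Cref{lem:common-core-1} while tracking only honest targets and honest referrers.
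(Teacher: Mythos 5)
Your main route is exactly the paper's proof: it applies \Cref{lem:common-core-1} at round $r+1$ to get a block $b$ reachable from every valid round-$(r+3)$ block, sets $S$ to the round-$r$ blocks referenced by $b$ (hence $|S|\ge 2f+1$), and concludes by transitivity of paths. Up to that point your proposal and the paper coincide, and the paper in fact stops there --- it states that every round-$(r+3)$ block ``has a path to'' every block in $S$ and treats that as establishing the vote property, implicitly using the fact that for an honestly-authored round-$r$ block a path is the same as a vote (\Cref{obs:no-equiv-vote} plus uniqueness of the honest author's block).

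The extra ``path $\Rightarrow$ vote'' step you single out is therefore a point where you are being more careful than the paper, but your proposed resolution of the equivocation case does not work as stated. The fact that $b^\ast$ is honest and references no sibling of $s$ constrains only the branch of the search that goes through $b^\ast$; a round-$(r+3)$ block $c$ has many other parents, and its depth-first traversal may encounter a sibling $s'$ of $s$ along one of those branches before it ever reaches $b^\ast$, in which case $c$ votes for $s'$ and not for $s$. Validating causal histories before referencing does not prevent this. Your fallback is also not attainable as stated: among the $2f+1$ round-$r$ parents of $b^\ast$ only $f+1$ are guaranteed to be honestly authored, and no counting argument can force \emph{all} honest round-$r$ blocks into the common core in the asynchronous model, so you cannot re-establish $|S|\ge 2f+1$ after discarding Byzantine authors. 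What the fallback does give is a set of at least $f+1$ honestly-authored blocks for which path and vote coincide, which is enough to keep the downstream liveness argument alive (the probability in \Cref{lem:commit-prob} stays bounded away from zero) but with a weaker constant than the paper claims; alternatively one can note that for an equivocating author each round-$(r+3)$ block still votes for \emph{some} round-$r$ block of that author, which is the sense in which the slot remains directly decidable.
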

\begin{proof}
    Let $r' = r+1$. By \Cref{lem:common-core-1}, there exists a block $b$ in round $r' = r+1$ such that any valid block from round $r' + 2 = r + 3$ has a path to $b$. Now let $S$ be the set of blocks referenced by the block $b$. $S$ must contain at least $2f+1$ blocks from round $r$. Every block from round $r+3$ has a path to $b$ and thus, through $b$, to every block in $S$.
\end{proof}

From this we can derive the following crucial lemma:
\begin{lemma}\label{lem:common-core-3}
    For any $r$, there exists a set $S$ of at least $2f+1$ blocks from round $r$ such that every block in $S$ has at least $2f+1$ certificates in round $r+4$.
\end{lemma}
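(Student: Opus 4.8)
The plan is to obtain this as an almost immediate corollary of \Cref{lem:common-core-2}, using the fact that in the $w=5$ configuration the \vote round associated with a \propose round $r$ is $r+3$ and the \certify round is $r+4$. First I would invoke \Cref{lem:common-core-2} at round $r$ to get a set $S$ of at least $2f+1$ blocks from round $r$ with the property that every valid block from round $r+3$ is a vote for every block in $S$.

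Next I would take an arbitrary valid block $c$ from round $r+4$. By the block validity rule (\Cref{sec:dag-structure}), $c$ references at least $2f+1$ distinct valid blocks from its immediately preceding round $r+3$. By the previous paragraph, each of these $2f+1$ referenced blocks is a vote for every block in $S$. Hence, by the definition of a certificate (procedure $\Call{IsCert}{\cdot}$ of \Cref{alg:helper}, which checks that at least $2f+1$ of a block's parents vote for the target), $c$ is a certificate for every block in $S$. Since $c$ was arbitrary, every valid block from round $r+4$ is a certificate for every block in $S$.

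Finally I would argue that round $r+4$ contains at least $2f+1$ blocks: each of the (at least) $2f+1$ honest validators produces exactly one block per round, and by \Cref{lem:eventually-include} these eventually appear in the local DAG; so round $r+4$ eventually holds at least $2f+1$ blocks, each of which—by the step above—is a certificate for every block in $S$. Combining, every block in $S$ has at least $2f+1$ certificates in round $r+4$, as claimed.

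I do not expect any real obstacle here. The only points that need care are (i) the round arithmetic of the $w=5$ wave layout, so that applying \Cref{lem:common-core-2} at $r$ indeed yields the statement about the \vote round $r+3$ and aligns with the \certify round $r+4$ used by $\Call{IsCert}{\cdot}$; and (ii) being explicit about why at least $2f+1$ round-$(r+4)$ blocks exist rather than merely asserting it. If one prefers a purely structural statement, the conclusion can instead be phrased as: every block that appears at round $r+4$ is a certificate for every block in $S$, and there are at least $2f+1$ such blocks once honest validators have advanced past round $r+4$.
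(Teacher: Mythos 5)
Your proposal is correct and follows essentially the same route as the paper's proof: take $S$ from \Cref{lem:common-core-2}, observe that each of the (at least $2f+1$) round-$(r+4)$ blocks references $2f+1$ round-$(r+3)$ blocks, all of which vote for every block in $S$, hence each such block certifies every block in $S$. Your extra care about why $2f+1$ round-$(r+4)$ blocks exist is a minor elaboration of a step the paper simply asserts.
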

\begin{proof}
    Take $S$ to be the set from \Cref{lem:common-core-2}. There are at least $2f+1$ blocks in $r+4$. Any block $b$ in round $r+4$ must reference $2f+1$ blocks from round $r+3$. By \Cref{lem:common-core-2}, every block from round $r+3$ is a vote for every block in $S$, so $b$ must be a certificate for every block in $S$.
\end{proof}

We denote by $\ell \leq 3f+1$ the number of leader slots per round.
\begin{lemma}\label{lem:commit-prob}
    Fix a round $r$. If $\ell >f$, then an honest validator directly commits at least one slot corresponding to round $r$. Otherwise, the probability that an honest validator directly commits at least one slot corresponding to round $r$ is at least $p^\star = 1-\frac{\binom{f}{\ell}}{\binom{3f+1}{\ell}} > 0$.
\end{lemma}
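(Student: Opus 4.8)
The plan is to combine the common-core guarantee of \Cref{lem:common-core-3} with the randomness of the global perfect coin. Fix the round $r$. By \Cref{lem:common-core-3} there is a set $S$ of at least $2f+1$ blocks from round $r$ such that every $b\in S$ has at least $2f+1$ certificates in round $r+4$ (the \certify round of the wave whose \propose round is $r$). A block that has a certificate is certified, so by \Cref{lem:one-cert-per-validator} the blocks of $S$ are produced by at least $2f+1$ \emph{distinct} validators; let $V_S$ denote this set of validators, so $|V_S|\ge 2f+1$. Once an honest validator has round $r+4$ in its DAG --- which it eventually does by \Cref{lem:eventually-include} --- it observes, for each $v\in V_S$, at least $2f+1$ certificates over $v$'s round-$r$ block, and the direct decision rule therefore classifies the corresponding round-$r$ slot as \scommit. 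Hence every round-$r$ leader slot whose validator lies in $V_S$ is directly committed.

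It remains to bound the probability that the coin assigns at least one of the $\ell$ leader slots of round $r$ to a validator in $V_S$. Let $B$ be the set of validators not in $V_S$; since there are $3f+1$ validators and $|V_S|\ge 2f+1$, we have $|B|\le f$. By the perfect-coin assumption of \Cref{sec:model}, the $\ell$ leader slots of round $r$ form a uniformly random $\ell$-element subset of the $3f+1$ validators, independent of the adversary and of $S$. An honest validator fails to directly commit any round-$r$ slot only if all $\ell$ of these slots fall inside $B$, which happens with probability $\binom{|B|}{\ell}/\binom{3f+1}{\ell}\le \binom{f}{\ell}/\binom{3f+1}{\ell}$. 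So an honest validator directly commits at least one round-$r$ slot with probability at least $p^\star=1-\frac{\binom{f}{\ell}}{\binom{3f+1}{\ell}}$, which is strictly positive since $f<3f+1$. If $\ell>f$ then $\binom{f}{\ell}=0$, and indeed $\ell$ distinct leader slots cannot all be packed into a set of at most $f$ validators, so by pigeonhole at least one leader slot is always assigned to $V_S$ and directly committed --- the deterministic guarantee claimed in that regime.

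The point that needs the most care is the claim in the first paragraph that a round-$r$ slot whose validator $v\in V_S$ is committed by the direct rule even when $v$ is Byzantine and equivocates with siblings of its $S$-block $b$. Here \Cref{lem:common-core-3} already gives $2f+1$ certificates over $b$, so $b$ is supported; what remains is to ensure the direct rule commits the slot to $b$ rather than skipping it on account of an equivocating sibling $b'$ of $b$. Since, by \Cref{lem:common-core-2}, every round-$(r+3)$ block votes for $b$, \Cref{obs:no-equiv-vote} implies that no such block votes for $b'$, so $b'$ is itself unsupported; this is precisely the ``at most one block per slot is committed, all siblings skipped'' situation described in \Cref{sec:decision-rule}, and is exactly the behavior relied upon by the safety lemmas (\Cref{lma4_honest_unique,lma5_honest_skip}). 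With that in hand, the rest is the elementary counting argument above, and the closed form $p^\star$ is exactly the probability that a uniformly random $\ell$-subset of $3f+1$ validators meets a fixed set of size at least $2f+1$.
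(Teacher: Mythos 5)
Your proof is correct and follows essentially the same route as the paper: invoke \Cref{lem:common-core-3} to obtain at least $2f+1$ directly committable round-$r$ blocks, then bound the chance that the uniformly random choice of $\ell$ leader slots misses them all by $\binom{f}{\ell}/\binom{3f+1}{\ell}$, with the $\ell>f$ case handled by pigeonhole/quorum intersection. The only difference is that you spell out two details the paper leaves implicit---that the committable blocks come from distinct validators (via \Cref{lem:one-cert-per-validator}) and that equivocating siblings cannot prevent a direct commit---which is a welcome but not essential refinement.
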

\begin{proof}
    By \Cref{lem:common-core-3}, at least $2f+1$ blocks from round $r$ can be directly committed, out of a maximum of $3f+1$ blocks. When the common coin is released in round $r+4$, it selects uniformly at random $\ell$ round-$r$ blocks as the $\ell$ slots of round $r$.

    In the case $\ell >f$, by quorum intersection, there exists at least one slot selected by the common coin among the $2f+1$ blocks that can be directly committed.

    In the case $\ell \leq f$, we can model the number of directly committed slots in round $r$ as a hypergeometric random variable, where a success event corresponds to selecting a slot that can be directly committed. The probability of $0$ successes (i.e., not committing any slots directly) is therefore at most $\frac{\binom{f}{\ell}}{\binom{3f+1}{\ell}} < 1$.
\end{proof}

\begin{lemma}\label{lem:all-slots-decide}
    Fix a slot $s$. Every honest validator eventually either commits or skips $s$, with probability $1$.
\end{lemma}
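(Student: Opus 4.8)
The plan is to combine two ingredients: a probabilistic statement that directly‑committable leader slots keep appearing, and a combinatorial statement that the anchor chain unrolled by the indirect rule cannot be extended forever without hitting one of them. I will fix a slot $s$ at round $r$ and an honest validator $v$, and show that $v$ fails to decide $s$ only on a probability‑$0$ event; a union bound over the finitely many honest validators then gives the lemma.

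\emph{Preliminaries I would set up first.} Every honest validator keeps advancing rounds — having $2f+1$ round‑$\rho$ blocks it produces a round‑$(\rho+1)$ block, and by \Cref{lem:eventually-include} these propagate — so $v$'s DAG grows without bound; in particular, if a block ever collects $2f+1$ certificates from distinct validators, then $v$ eventually observes all of them and directly commits that slot, and whenever the indirect rule ever finds a committed anchor for $s$, $v$ decides $s$. Moreover, by \Cref{lma1_cert_path} a block of the common core $S_\rho$ guaranteed by \Cref{lem:common-core-3} (which has $2f+1$ certificates in round $\rho+w-1$, contributed by honest validators) is certified by \emph{all} later blocks, hence a slot containing such a block can be neither directly skipped (quorum intersection with its $2f+1$ votes) nor indirectly skipped (its committed anchor would reference one of its certificates). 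Thus such a slot is eventually directly committed by $v$, and conversely every slot $v$ ever skips contains no core block.

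\emph{Good rounds are frequent.} Call round $\rho$ \emph{good} (for $v$) if $v$ eventually directly commits its first leader slot $(\rho,0)$. Since leaders are chosen ``after the fact'', $S_\rho$ is determined before round $\rho$'s coin is revealed, the coin picks the $\ell$ leader slots uniformly, and the coins of distinct waves are independent; so, exactly as in \Cref{lem:commit-prob}, $(\rho,0)$ lies in $S_\rho$ — and is therefore eventually directly committed — with probability at least $\frac{2f+1}{3f+1}>\frac12$, independently over $\rho$. Hence, with probability $1$, infinitely many rounds are good.

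\emph{Breaking the anchor chain.} I would then argue by contradiction: suppose $v$ never decides $s$. Set $\sigma_0=s$ and let $\sigma_{k+1}$ be the earliest leader slot at a round exceeding $\mathrm{round}(\sigma_k)+w-1$ that $v$ does not eventually skip. By the preliminaries each $\sigma_{k+1}$ is itself forever undecided (otherwise $v$ would decide $\sigma_k$, and, descending the chain, $s$), and $\mathrm{round}(\sigma_{k+1})\ge\mathrm{round}(\sigma_k)+w$. Since every slot $v$ eventually skips contains no core block, the slot $(\mathrm{round}(\sigma_k),0)$, together with every leader slot of every round strictly between $\mathrm{round}(\sigma_{k-1})+w-1$ and $\mathrm{round}(\sigma_k)$, must receive a non‑core coin pick; in particular no good round lies in $\bigcup_{k\ge1}\{\mathrm{round}(\sigma_{k-1})+w,\dots,\mathrm{round}(\sigma_k)\}$, a set which, because consecutive chain rounds are at least $w$ apart, always contains the fixed arithmetic progression $r+w,\,r+2w,\,r+3w,\dots$. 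As good rounds are independent with probability more than $\frac12$ each, almost surely they do not all avoid this progression; and an adaptive adversary gains nothing, because routing the chain around a good round $r+mw$ forces an even earlier round to be \emph{entirely} non‑core, which is no more likely (and, when $\ell>f$, outright impossible by \Cref{lem:commit-prob}). This contradicts the existence of infinitely many good rounds.

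\emph{Where the work is.} The delicate step is the last one: turning ``the chain cannot dodge good rounds forever'' into a fully rigorous counting/measure argument, correctly tracking the $\ell$ leader slots per round and the ``gap'' rounds $\{\mathrm{round}(\sigma_k)+1,\dots,\mathrm{round}(\sigma_k)+w-1\}$ that the anchor search never inspects (and hence which may host good rounds without helping). I expect this to parallel the liveness arguments of DagRider~\cite{dag-rider} and Mysticeti~\cite{mysticeti}, generalised from single‑leader to multi‑leader rounds.
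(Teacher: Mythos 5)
Your setup (common-core slots can never be skipped and are eventually committed; the anchor chain of a forever-undecided slot consists of forever-undecided slots at rounds at least $w$ apart, with all intervening inspected slots skipped) is sound and matches the ingredients the paper uses. But the step you lean on to get the contradiction is false as stated: the union $\bigcup_{k\ge1}\{\mathrm{round}(\sigma_{k-1})+w,\dots,\mathrm{round}(\sigma_k)\}$ does \emph{not} always contain the arithmetic progression $r+w, r+2w,\dots$. The ``gap'' rounds $\{\mathrm{round}(\sigma_k)+1,\dots,\mathrm{round}(\sigma_k)+w-1\}$ are excluded, and progression elements can land there: e.g.\ if $\mathrm{round}(\sigma_1)=r+w+1$, then $r+2w$ lies strictly between the first interval (ending at $r+w+1$) and the second (starting at $r+2w+1$). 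You concede exactly this in your last paragraph (gap rounds ``may host good rounds without helping''), which undercuts the progression claim. Moreover, even granting ``infinitely many good rounds almost surely,'' this cannot contradict the chain's existence, because the set of rounds the chain constrains is itself random and adapted to the coins; good rounds could all fall in gaps or after the anchors within their rounds, and the hand-wave that ``an adaptive adversary gains nothing'' is precisely the unproved measure-theoretic core. Your ``independently over $\rho$'' is also only a conditional bound, since the core set $S_\rho$ depends on the adversarial DAG, which may depend on earlier coins.

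The paper avoids all of this with a direct sequential argument, which is also the natural repair of your proof: for the chain to acquire its $(k+1)$-st link, the \emph{first} leader slot of the specific round $\mathrm{round}(\sigma_k)+w$ must end up skipped or forever undecided (it is either before $\sigma_{k+1}$, hence skipped, or equal to $\sigma_{k+1}$), hence not committed, hence its coin pick must miss the $\ge 2f+1$ directly-committable blocks of that round. Conditioned on the execution up to the release of that round's coin, this miss has probability at most $1-p^\star$ with $p^\star$ from \Cref{lem:commit-prob}, so the probability that the chain reaches $t$ links is at most $(1-p^\star)^t\to 0$. No fixed progression, no appeal to infinitely many good rounds, and no independence across rounds is needed—only the per-round conditional bound that the ``after the fact'' coin provides.
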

\begin{proof}
    We prove the lemma by showing that the probability of $s$ remaining undecided forever at some honest validator is $0$. In order for $s$ to remain undecided forever, $s$ cannot be committed or skipped directly. Furthermore, $s$ cannot be decided using the indirect rule. This means that the anchor $s'$ of $s$ must also remain undecided forever, and therefore the anchor $s''$ of $s'$ must remain undecided forever, and so on. The probability of this occurring is at most equal to the probability of an infinite sequence of rounds with no directly committed slots, equal to $\lim_{t\rightarrow\infty}(1 - p^\star)^t = 0$, where $p^\star > 0$ is the probability from \Cref{lem:commit-prob}.
\end{proof}

\begin{theorem}[Validity]
    \sysname satisfies the validity property of Byzantine Atomic Broadcast.
\end{theorem}
\begin{proof}
    Let $v$ be an honest validator and $b$ a block broadcast by $v$. We show that, with probability $1$, $b$ is eventually delivered by every honest validator. By \Cref{lem:eventually-include}, $b$ is eventually included in the local DAG of every honest validator. So every honest validator will eventually include a reference to $b$ in at least one of its blocks. Let $r$ be the highest round at which some honest validator includes a reference to $b$ in one of its blocks. By \Cref{lem:commit-prob}, with probability $1$, eventually some block $b'$ at a round $r' > r$ will be directly committed. Block $b'$ must reference at least $2f+1$ blocks, thus at least $f+1$ blocks from honest validators. Since all validators have $b$ in their causal histories by round $r$, $b'$ must therefore have a path to $b$. \Cref{lem:all-slots-decide} guarantees that all slots before $b'$ are eventually decided, so $b'$ is eventually delivered. Thus, $b$ will be delivered at all honest validators at the latest when $b'$ is delivered along with its causal history.
\end{proof}

\begin{theorem}[Agreement]
    \sysname satisfies the agreement property of Byzantine Atomic Broadcast.
\end{theorem}
\begin{proof}
    Let $v$ be an honest validator and $b$ a block delivered by $v$. We show that, with probability $1$, $b$ is eventually delivered by every honest validator. Let $l$ be the leader block with which $b$ is delivered, and $s$ the corresponding slot. By \Cref{lem:all-slots-decide}, all blocks up to and including $s$ are eventually decided by all honest validators, with probability $1$. By \Cref{lem:agree-commit}, all honest validators commit $l$ in $s$. Therefore, all honest validators deliver $b$ eventually.
\end{proof}

\subsection{Specific Proofs for $w = 4$}\label{app:w4-proofs}

We now turn to the liveness of the $w = 4$ version of \sysname. We first show that under an asynchronous network, liveness is guaranteed, albeit with a smaller probability of direct commit at each wave than in the $w=5$ version. We later show that under a random network, \sysname directly commits all valid leader blocks in each wave with high probability. Recall that in the random network model, a valid block from round $r+1$ references a set of $2f+1$ valid blocks from round $r$, sampled uniformly at random among all valid round-$r$ blocks.

\begin{lemma}\label{lem:common-core-w4-1}
    For any $r$, there exists a block $b$ from round $r$ such that $b$ has at least $2f+1$ certificates in round $r+3$.
\end{lemma}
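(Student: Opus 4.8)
The plan is to reuse the common-core machinery of \Cref{lem:common-core-1,lem:common-core-2,lem:common-core-3}, specialised to the shorter $w=4$ wave. In that configuration the \propose, \vote and \certify rounds of the wave starting at round $r$ are $r$, $r+2$ and $r+3$ (the second \boost round is dropped), so a certificate for a round-$r$ block lives in round $r+3$ and is built from votes cast in round $r+2$. Because only a single \boost round now separates \propose from \vote, there is no room for the ``detour through round $r+1$'' used to prove \Cref{lem:common-core-2}; accordingly I aim only for the weaker statement that \emph{one} round-$r$ block (rather than $2f+1$ of them) gathers $2f+1$ certificates, which is exactly what the lemma asks for and is consistent with the smaller direct-commit probability for $w=4$ noted above.

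First I would apply \Cref{lem:common-core-1} at round $r$ to obtain a round-$r$ block $b$ such that every valid round-$(r+2)$ block has a path to $b$. The counting argument proving that lemma can be carried out using only references to round-$r$ blocks authored by honest validators, so $b$ may be taken to be the block of an honest validator; in particular $b$ is the unique block of round $r$ from its author. Consequently, for any valid round-$(r+2)$ block $x$, the depth-first search in $\Call{IsVote}{x,b}$ (Line~\ref{alg:line:isvote} of \Cref{alg:helper}), which looks for the block authored by $b$'s creator in round $r$, can only ever return $b$, and it does return a (non-$\perp$) block because $b$ lies on a reference path strictly below $x$; hence $x$ is a vote for $b$. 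Therefore \emph{every} valid round-$(r+2)$ block votes for $b$.

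It then remains to upgrade these votes to certificates by a one-step quorum argument, exactly parallel to the proof of \Cref{lem:common-core-3}. Every valid round-$(r+3)$ block $c$ references at least $2f+1$ distinct valid round-$(r+2)$ blocks; by the previous paragraph each of them is a vote for $b$, so by the definition of $\Call{IsCert}{\cdot}$ (Line~\ref{alg:line:iscert} of \Cref{alg:helper}), $c$ is a certificate for $b$. Since there are at least $2f+1$ honest validators, round $r+3$ contains at least $2f+1$ valid blocks, each of which is a certificate for $b$, giving $b$ the required $2f+1$ certificates in round $r+3$.

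The step I expect to be the main obstacle is bridging ``$x$ has a path to $b$'' and ``$x$ votes for $b$'': since $\Call{IsVote}{\cdot}$ is a deterministic depth-first search returning the \emph{first} matching ancestor, a mere path to $b$ would not force a vote for $b$ if $b$'s slot were equivocated. I close this gap by insisting that the common-core block $b$ be honestly produced, so that no equivocating block competes with it in any causal history; the one detail requiring care is therefore to verify that \Cref{lem:common-core-1}---or a minor strengthening of its proof---really does deliver an honest common-core block, after which the argument is routine.
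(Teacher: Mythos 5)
Your proof is correct and takes essentially the same route as the paper's: apply \Cref{lem:common-core-1} at round $r$ to obtain a block $b$ reachable from every round-$(r+2)$ block, conclude that every such block is a vote for $b$, and then observe that every round-$(r+3)$ block references $2f+1$ round-$(r+2)$ blocks and is therefore a certificate for $b$, with at least $2f+1$ such blocks coming from honest validators. The equivocation subtlety you flag is real---the paper simply writes ``has a path to $b$, and therefore is a vote for $b$''---and your fix is sound, since the counting argument in the proof of \Cref{lem:common-core-1} only considers round-$r$ blocks authored by the $2f+1$ honest validators, so the common-core block can indeed be taken honest-authored and the depth-first search defining \isvote can return no equivocating sibling.
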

\begin{proof}
    By \Cref{lem:common-core-1}, there exists $b$ in round $r$ such that any block in round $r+2$ has a path to $b$, and therefore is a vote for $b$. Since any block in round $r+3$ must reference $2f+1$ blocks in round $r+2$, any block in round $r+3$ must be a certificate for $b$. Since every honest validator publishes a block in round $r+3$, there must exist at least $2f+1$ certificates for $b$ in round $r+3$.
\end{proof}

We again denote by $\ell \leq 3f+1$ as the number of leader slots per round.
\begin{lemma}\label{lem:common-core-w4-2}
    Assume the asynchronous network model and fix a round $r$. If $\ell = 3f+1$, then an honest validator commits at least one slot corresponding to round $r$. Otherwise, the probability that an honest validator directly commits at least one slot corresponding to round $r$ is at least $p^\star = \frac{\ell}{3f+1}$.
\end{lemma}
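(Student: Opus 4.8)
The plan is to lean on \Cref{lem:common-core-w4-1}, which already exhibits one directly-committable block in round $r$, and then to bound the probability that the common coin elects that block's slot.

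Fix the round $r$ and let $b$ be the round-$r$ block supplied by \Cref{lem:common-core-w4-1}, authored by some validator $v^\star$. I would first argue that whenever the common coin elects $v^\star$ as a leader of round $r$, every honest validator eventually directly commits the slot of $v^\star$ in round $r$. On the one hand, the proof of \Cref{lem:common-core-w4-1} shows that every valid round-$(r+2)$ block is a vote for $b$, so no honest validator ever observes $2f+1$ round-$(r+2)$ non-votes for $b$ and $\Call{SkippedLeader}{\cdot}$ never triggers on $b$. On the other hand, that same proof exhibits at least $2f+1$ honest round-$(r+3)$ blocks that are certificates for $b$; by \Cref{lem:eventually-include} these eventually appear in every honest DAG, so $\Call{SupportedLeader}{\cdot}$ eventually holds for $b$. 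Hence $\Call{TryDirectDecide}{\cdot}$ eventually returns $\scommit(b)$ at every honest validator, i.e., the slot of $v^\star$ in round $r$ is directly committed.

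It remains to bound the probability that $v^\star$ is one of the $\ell$ leaders of round $r$. Since the common coin is released only in round $r+3$, the good block $b$ is already fixed before the coin is revealed; the coin then selects the $\ell$ leader slots of round $r$ as a window of $\ell$ consecutive validator indices starting from a uniformly random offset (cf. $\Call{LeaderBlock}{\cdot}$ in \Cref{alg:decider}). By symmetry, the fixed validator $v^\star$ falls in this window with probability exactly $\frac{\ell}{3f+1}$. If $\ell = 3f+1$ the window is the entire validator set, so $v^\star$ is always a leader and an honest validator always commits a slot of round $r$; if $\ell < 3f+1$, combining with the previous paragraph yields that an honest validator directly commits at least one slot of round $r$ with probability at least $p^\star = \frac{\ell}{3f+1}$.

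The most delicate point is the first step: checking, in the asynchronous model, that electing $v^\star$ truly forces the slot's decision to resolve to \scommit rather than staying \sundecided or flipping to \sskip. This relies on the common-core argument of \Cref{lem:common-core-1} forcing \emph{every} valid round-$(r+2)$ block---including equivocating Byzantine ones---to support $b$, on $b$ being unique because its author $v^\star$ is honest (so equivocation cannot inflate the non-vote count), and on \Cref{lem:eventually-include} preventing the $2f+1$ honest certificates from being withheld forever. Once this is settled, the probabilistic computation and the $\ell = 3f+1$ case are routine, and the gap between $\frac{\ell}{3f+1}$ here and the bound of \Cref{lem:commit-prob} for $w=5$ is exactly the price of \Cref{lem:common-core-w4-1} guaranteeing only a single good block per round.
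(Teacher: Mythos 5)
Your proposal is correct and follows essentially the same route as the paper: invoke \Cref{lem:common-core-w4-1} to obtain one directly-committable round-$r$ block, handle $\ell = 3f+1$ trivially, and observe that a fixed block's slot is elected by the coin with probability $\frac{\ell}{3f+1}$ (the paper phrases this as a hypergeometric draw, you as a uniformly shifted window of $\ell$ offsets, which yields the same inclusion probability). Your extra care in checking that the skip rule can never fire on $b$ and that the $2f+1$ certificates eventually reach every honest DAG (via \Cref{lem:eventually-include}) only makes explicit what the paper leaves implicit in the phrase ``can be directly committed.''
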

\begin{proof}
    By \Cref{lem:common-core-w4-1}, there exists at least one block $b$ in round $r$ that can be directly committed. When the common coin is released in round $r+3$, it selects uniformly at random $\ell$ round-$r$ blocks as the $\ell$ slots of round $r$.

    If $\ell=3f+1$, then all possible blocks of round $r$ are included in the slots, and thus $b$ is directly committed.

    In the case $\ell < 3f+1$, we can model the number of directly committed slots in round $r$ as a hypergeometric random variable, where a success event corresponds to selecting the slot that can be directly committed. There are $3f+1$ states in total, out of which only $1$ is a success state; and there are $\ell$ draws. The probability of one success is therefore $\frac{\binom{1}{1}\binom{3f}{\ell - 1}}{\binom{3f+1}{\ell}} = \frac{\ell}{3f+1}$.
\end{proof}

\begin{lemma}\label{lem:random-common-core}
    In the random network model, with high probability, every block in round $r+2$ is a vote for every block in round $r$.
\end{lemma}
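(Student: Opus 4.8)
The plan is to establish the slightly stronger claim that, with high probability, \emph{every} block in round $r+2$ has a path to \emph{every} block in round $r$. Since honest validators author a single block per round, a path from a round-$(r+2)$ block $d$ to a round-$r$ block $c$ implies (via the deterministic depth-first interpretation of \Call{IsVote}{\cdot}) that $d$ votes for $c$ whenever $c$ is the unique block in its slot --- the only case that matters for committing leaders. I would fix one round-$r$ block $c$ authored by validator $v_c$ and one round-$(r+2)$ block $d$, and bound the probability that $d$ fails to reach $c$.

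In the random network model, $d$ references the round-$(r+1)$ blocks of a uniformly random set $S_d$ of $2f+1$ validators, and since at most $f$ validators are Byzantine, $S_d$ contains at least $f+1$ honest validators. For each honest $u \in S_d$, the round-$(r+1)$ block of $u$ independently references the round-$r$ slot of $v_c$ with probability $\tfrac{2f+1}{3f+1} > \tfrac13$; and as soon as this happens for one such $u$, $d$ has the two-hop path $d \to (\text{block of } u) \to c$. The event ``$u \in S_d$'' depends only on $d$'s randomness while ``$u$'s block references $v_c$'s slot'' depends only on $u$'s randomness, and these are distinct honest validators, so the events are independent and the miss-probabilities multiply: conditioning on $S_d$, the probability that $d$ fails to reach $c$ is at most $\bigl(\tfrac{f}{3f+1}\bigr)^{|S_d \cap H|} \le \bigl(\tfrac{f}{3f+1}\bigr)^{f+1} < 3^{-(f+1)}$, where $H$ denotes the honest set.

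Finally I would take a union bound over all pairs $(c,d)$. Since honest validators only admit a block once its entire causal history has been downloaded, each round carries only polynomially many (at most $O(n)$ relevant) blocks, so there are $O(n^2)$ pairs and the total failure probability is at most $O(n^2)\cdot 3^{-(f+1)}$, which is negligible; hence with high probability every round-$(r+2)$ block reaches --- and therefore votes for --- every round-$r$ block. This lemma then feeds directly into the next step of the $w=4$ random-network analysis: every round-$(r+3)$ (\certify) block references $2f+1$ round-$(r+2)$ blocks, each of which votes for every round-$r$ block, so every round-$(r+3)$ block is a certificate for every round-$r$ block, giving each leader slot $2f+1$ certificates and thus a direct commit.

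The main obstacle is the probabilistic bookkeeping: (i) making precise that the parent choices of distinct honest validators are mutually independent (which is what licenses multiplying the miss-probabilities) and that $S_d$ always meets $H$ in at least $f+1$ validators; and (ii) the edge case of a \emph{Byzantine} round-$(r+2)$ block $d$, which need not sample its parents at random and could in principle reference $2f+1$ equivocating Byzantine round-$(r+1)$ blocks. This last point is handled by observing that only honest vote-round blocks are needed to form the $2f+1$ certificates required by the downstream lemma, so it suffices to prove the statement for honest $d$ (equivalently, to take each round to carry $O(n)$ blocks), which is also what keeps the union bound polynomial.
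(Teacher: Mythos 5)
Your proposal follows essentially the same argument as the paper: bound the probability that a fixed round-$(r+2)$ block has no path to a fixed round-$r$ block by an exponentially small quantity (the paper uses all $2f+1$ referenced round-$(r+1)$ blocks to get $\bigl(\tfrac{f}{3f+1}\bigr)^{2f+1}$, you restrict to the at least $f+1$ honest ones to get $\bigl(\tfrac{f}{3f+1}\bigr)^{f+1}$), and then combine over the $O(n^2)$ pairs (the paper phrases this as Markov's inequality on the expected number of unreachable pairs, which is the same first-moment/union-bound step). Your added care about Byzantine-authored round-$(r+1)$/round-$(r+2)$ blocks and the path-versus-vote distinction under equivocation is a refinement of details the paper's proof silently elides, not a different route.
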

\begin{proof}
    We prove the lemma by showing, through Markov's inequality, that the probability of any block in round $r$ being unreachable from any block in round $r+2$ approaches 0 exponentially in $f$.

    Take a pair of blocks $b_r$ and $b_{r+2}$ in rounds $r$ and $r+2$, respectively. We compute the probability that there is no path from $b_{r+2}$ to $b_r$. Block $b_{r+2}$ must reference $2f+1$ blocks from round $r+1$; let $b_{r+1}$ be one such block. The probability that $b_{r+1}$ references $b_r$ is at least $p = \frac{2f+1}{3f+1}$, since $b_{r+1}$ references $2f+1$ randomly selected blocks from round $r$. The probability that there is no path from $b_{r+2}$ to $b_r$ is therefore at most $q = (1 - p)^{2f+1}$.

    We now compute the expected number of pairs of blocks $b_r$ and $b_{r+2}$ from rounds $r$ and $r+2$, respectively, such that $b_r$ is not reachable from $b_{r+2}$. There are at most $(3f+1)^2$ pairs, and each pair is not connected by a path with probability at most $q$, thus the expected number of unreachable pairs is at most $E = q(3f+1)^2 = (3f+1)^2 (1 - p)^{2f+1}$.

    Using Markov's inequality, the probability that there exists at least one unreachable pair is: $$\Pr(\text{At least one unreachable pair}) \leq E = (3f+1)^2 (1 - p)^{2f+1}.$$ As $f$ increases, this probability rapidly approaches $0$ due to the exponential term.
\end{proof}

\begin{lemma}\label{lem:random-common-core-2}
    Assume the random network model and fix a round $r$. With high probability, an honest validator directly commits every leader slot chosen by the common coin in round $r+3$.
\end{lemma}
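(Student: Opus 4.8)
The plan is to derive \Cref{lem:random-common-core-2} as a short corollary of \Cref{lem:random-common-core} together with the definition of the $w=4$ direct decision rule. Recall that in the $w=4$ configuration the wave anchored at round $r$ has propose round $r$, a single boost round $r+1$, vote round $r+2$, and certify round $r+3$; and that the direct decision rule (the \textsc{SupportedLeader} check of \Cref{alg:decider}, which relies on \textsc{IsCert} of \Cref{alg:helper}) classifies a slot holding a block $b_{leader}$ as \scommit precisely when there are at least $2f+1$ blocks in round $r+3$, each of which references at least $2f+1$ round-$(r+2)$ blocks that vote for $b_{leader}$.

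First I would condition on the high-probability event furnished by \Cref{lem:random-common-core}: every valid block in round $r+2$ votes for every valid block in round $r$. Second, I would observe that under this event the certification of round-$r$ blocks becomes automatic: by the block-validity rule every valid round-$(r+3)$ block references at least $2f+1$ distinct round-$(r+2)$ blocks, and since each of those votes for $b_{leader}$, every round-$(r+3)$ block is a certificate for $b_{leader}$ in the sense of \textsc{IsCert}. Third, since at least $2f+1$ honest validators publish a block in round $r+3$ (\Cref{lem:eventually-include}), $b_{leader}$ accumulates at least $2f+1$ certificates there, so \textsc{SupportedLeader} returns true and \textsc{TryDirectDecide} outputs $\scommit(b_{leader})$. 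Applying this to each of the $\ell$ slots selected by the common coin opened in round $r+3$ gives the claim, and the probability bound is inherited verbatim from \Cref{lem:random-common-core} since we conditioned on exactly its event — no new probabilistic estimate is needed.

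The only place that needs care, and the step I expect to be the main (minor) obstacle, is a chosen leader slot whose proposer is Byzantine: if that proposer equivocated, the slot may hold several round-$r$ blocks, and a round-$(r+2)$ block votes for only one of them (\Cref{obs:no-equiv-vote}), so the argument above directly commits whichever equivocating block the honest vote-round blocks support, while by \Cref{lem:one-cert-per-validator} no second block of that slot can ever be committed; an empty slot simply has no block to commit and the statement is vacuous for it. I would therefore spell this case out explicitly so that ``directly commits every leader slot chosen by the common coin'' is read as ``directly decides every chosen slot, committing the unique committable block whenever the slot is nonempty.''
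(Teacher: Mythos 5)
Your proposal follows essentially the same route as the paper's proof: condition on the event of \Cref{lem:random-common-core}, note that every round-$(r+3)$ block then references $2f+1$ voting round-$(r+2)$ blocks and is hence a certificate for every round-$r$ block, so each chosen slot accumulates the $2f+1$ certificates needed by the direct decision rule, with the probability bound inherited directly. Your closing discussion of equivocating Byzantine proposers is extra care the paper's proof does not attempt (it silently treats all chosen slots as holding a single block), and while your reading of that corner case is a reasonable patch rather than a fully argued claim, it does not change the fact that the core argument coincides with the paper's.
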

\begin{proof}
    By \Cref{lem:random-common-core}, every block in round $r+2$ is a vote for every block in round $r$ with high probability. Thus every block in round $r+3$ is a certificate for every block in round $r$ with high probability. So every honest validator sees $2f+1$ certificates for every block in round $r$ and thus sees also for the blocks chosen by the common coin at least $2f+1$ certificates.
\end{proof}

\begin{lemma}\label{lem:all-slots-decide-2}
    Fix a slot $s$. In both the asynchronous network model and the random network model, every honest validator eventually either commits or skips $s$ with probability~$1$.
\end{lemma}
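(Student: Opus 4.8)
The plan is to mirror the proof of \Cref{lem:all-slots-decide}, replacing the $w=5$ common-core machinery with its $w=4$ counterparts. Fix a slot $s$ and an honest validator $v$; I would show that the probability that $s$ remains $\sundecided$ at $v$ forever is $0$, and then take a union bound over the finitely many honest validators. If $s$ is never decided directly (neither $\sskip$ nor $\scommit$ via the direct rule), then by the indirect rule $s$ can only be decided once its anchor is decided; the anchor of $s$ is itself a slot in a later wave, so for $s$ to remain $\sundecided$ forever, the anchor $s'$ of $s$ must remain $\sundecided$ forever, hence the anchor $s''$ of $s'$ must remain $\sundecided$ forever, and so on, producing an infinite chain of waves none of which direct-commits any slot. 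Note that the anchor-chain structure and the consistency of the direct and indirect rules (Lemmas~\ref{lma4_honest_unique}, \ref{lma5_honest_skip}, \ref{lem:consistent}) were established generically and apply verbatim to $w=4$.

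Next I would bound the probability of such an infinite chain. The key point is that \Cref{lem:common-core-w4-1} is purely combinatorial, relying only on quorum intersection among the $2f+1$ parents of each block, and therefore holds for every reachable DAG in both the asynchronous and the random network models (the latter merely restricts the message schedule, reducing adversarial power). Consequently, conditioned on any DAG history, round $r$ always contains at least one block that can be directly committed, and the perfect coin released in round $r+3$ then selects the $\ell$ leader slots of round $r$ uniformly at random and independently of all earlier coins and of the DAG. Hence, by \Cref{lem:common-core-w4-2}, for every wave the probability that some slot of that wave is directly committed is at least $p^\star = \ell/(3f+1) > 0$ (and $p^\star = 1$ when $\ell = 3f+1$), and these events are independent across waves. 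The probability of an infinite chain of waves with no direct commit is therefore at most $\lim_{t\to\infty}(1-p^\star)^t = 0$, so $s$ is eventually decided with probability $1$. In the random network model the conclusion holds a fortiori, since \Cref{lem:random-common-core-2} shows that each wave even direct-commits \emph{all} of its leader slots with high probability.

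The \emph{main obstacle} is making precise that the ``no direct commit'' events along the anchor chain can be treated as independent with per-wave failure probability at most $1-p^\star$, uniformly over the adversary's choices. Concretely, one must argue that the common-core block of \Cref{lem:common-core-w4-1} exists for \emph{every} possible DAG (so the adversary cannot ``use up'' the good block by manipulating references across waves) and that the coin outputs are independent of the adversary's schedule (a property of the perfect coin, \Cref{sec:model}), so that the coin's uniform selection among the $3f+1$ candidate slots lands on a committable one with probability at least $p^\star$ regardless of conditioning. Once this uniformity-and-independence point is in place, the geometric-decay bound and the reduction to the argument of \Cref{lem:all-slots-decide} are routine.
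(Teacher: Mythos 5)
Your proposal is correct and follows essentially the same route as the paper: the paper's proof likewise reduces to the argument of \Cref{lem:all-slots-decide} via the anchor chain, invokes \Cref{lem:common-core-w4-2} (asynchronous model) and \Cref{lem:random-common-core-2} (random model) for a per-wave direct-commit probability bounded away from zero, and concludes with the same geometric-decay limit. Your extra care about the combinatorial (schedule-independent) nature of \Cref{lem:common-core-w4-1} and the independence of the coin from the adversary's schedule only makes explicit what the paper leaves implicit.
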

\begin{proof}
    The proof is analogous to the proof of \Cref{lem:all-slots-decide}. By \Cref{lem:common-core-w4-2} and \Cref{lem:random-common-core-2}, the probability of a honest validator directly committing any leader block in a given round is greater than $0$, in both the asynchronous and the random network models. Thus, the probability of an infinite sequence of rounds without any directly committed blocks is $0$. This implies that every slot that is not directly decided will eventually have a committed anchor and become decided itself.
\end{proof}

\begin{theorem}[Validity]
    \sysname satisfies the validity property of Byzantine Atomic Broadcast.
\end{theorem}
\begin{proof}
    The proof is similar to the proof of validity in the $w=5$ case. Let $v$ be an honest validator and $b$ a block broadcast by $v$. We show that, with probability $1$, $b$ is eventually delivered by every honest validator. By \Cref{lem:eventually-include}, $b$ is eventually included in the local DAG of every honest validator. So every honest validator will eventually include a reference to $b$ in at least one of its blocks. Let $r$ be the highest round at which some honest validator includes a reference to $b$ in one of its blocks. By \Cref{lem:common-core-w4-2} and \Cref{lem:random-common-core-2}, with probability $1$, eventually some block $b'$ at a round $r' > r$ will be directly committed. Block $b'$ must reference at least $2f+1$ blocks, thus at least $f+1$ blocks from honest validators. Since all validators have $b$ in their causal histories by round $r$, $b'$ must therefore have a path to $b$. \Cref{lem:all-slots-decide-2} guarantees that all slots before $b'$ are eventually decided, so $b'$ is eventually delivered. Thus, $b$ will be delivered at all honest validators at the latest when $b'$ is delivered along with its causal history.
\end{proof}

\begin{theorem}[Agreement]
    \sysname satisfies the agreement property of Byzantine Atomic Broadcast.
\end{theorem}
\begin{proof}
    The proof is similar to the proof of agreement in the $w=5$ case. Let $v$ be an honest validator and $b$ a block delivered by $v$. We show that, with probability $1$, $b$ is eventually delivered by every honest validator. Let $l$ be the leader block with which $b$ is delivered and $s$ the corresponding slot. By \Cref{lem:all-slots-decide-2}, all blocks up to and including $s$ are eventually decided by all honest validators, with probability $1$. By \Cref{lem:agree-commit}, all honest validators commit $l$ in $s$. Therefore, all honest validators deliver $b$ eventually.
\end{proof}

\para{Note: \sysname with $w=3$}
It is possible to configure \sysname with $3$-round waves, by removing all \boost rounds, and keeping the \propose round ($r$), \vote round ($r+1$) and \certify round ($r+2$). Such a protocol would still satisfy safety, as all results up to and including \Cref{thm:integrity} hold when $w=3$. However, this $3$-round version of \sysname would no longer satisfy liveness, because the common core approach in \Cref{lem:common-core-1} can no longer be used to guarantee that at least one leader block can be directly committed in each wave.
\section{\sysname Impact of Multi-leader} \label{sec:eval-impact-multi-leader}
This section completes \Cref{sec:evaluation} by presenting the impact of the number of leaders per round in \sysname when implemented with $5$ rounds per wave.
\Cref{fig:leaders-w-5} illustrates how \sysname configured with a wave length of 5 rounds performs with 1, 2, and 3 leaders per round under both normal conditions and scenarios involving 3 crash faults.
We observe a latency reduction as the number of leaders increases similar to \Cref{fig:leaders} (\Cref{sec:evaluation}).
Specifically, when the number of leaders rises from 1 to 3, \sysname's average latency decreases by approximately 40ms in ideal scenario,
and by approximately 100ms in the crash failure scenario.
Similarly to \Cref{fig:leaders}, increasing the number of leaders beyond 3 did not further decrease latency.

\begin{figure}[t]
    \vskip -1em
    \centering
    \includegraphics[width=0.7\linewidth]{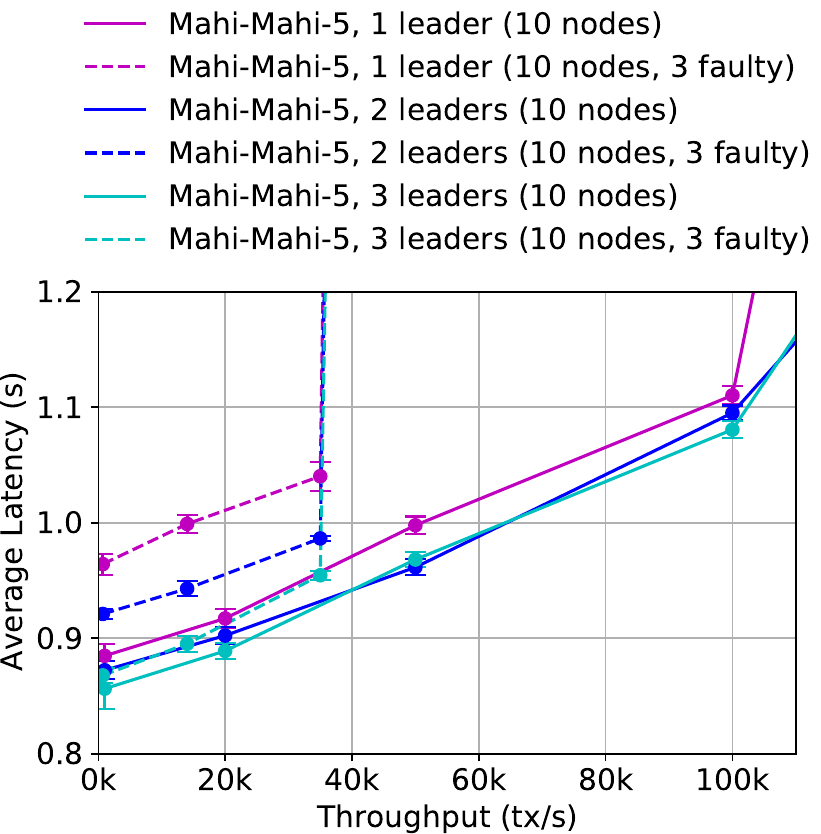}
    \caption{
        Impact of the number of leaders per round in \sysname. WAN measurements with 10 validators. Zero and three faults. 512B transaction size.
    }
    \label{fig:leaders-w-5}
    \vskip -1em
\end{figure}

\end{document}